\begin{document}


\title{A State-Based Characterisation of the Conflict Preorder}

\author{Simon Ware \qquad Robi Malik
        \institute{Department of Computer Science,
                   University of Waikato, Hamilton, New Zealand}
        \email{$\{$siw4,robi$\}$@waikato.ac.nz}}

\def\titlerunning{A State-Based Characterisation of the Conflict Preorder}
\def\authorrunning{Simon Ware and Robi Malik}

\maketitle

\begin{abstract}
  This paper proposes a way to effectively compare the potential of
  processes to cause \emph{conflict}. In discrete event systems theory, two
  concurrent systems are said to be in conflict if they can get trapped in
  a situation where they are both waiting or running endlessly, forever
  unable to complete their common task. The \emph{conflict preorder} is a
  process-algebraic pre-congruence that compares two processes based on
  their possible conflicts in combination with other processes. This paper
  improves on previous theoretical descriptions of the conflict preorder
  by introducing \emph{less conflicting pairs} as a concrete state-based
  characterisation. Based on this characterisation, an effective algorithm
  is presented to determine whether two processes are related according to
  the conflict preorder.
\end{abstract}


\section{Introduction}

A key question in process algebra is how processes can be composed and
compared~\cite{DeNHen:84,vGl:01}. An understanding of what makes processes
equivalent is important for several applications, ranging from comparison
and minimisation in model checking to program construction using
abstraction and refinement. Several equivalence relations have been
studied, most notably \emph{observation equivalence}~\cite{Mil:89},
\emph{failures equivalence}~\cite{Hoa:85}, and \emph{trace
equivalence}~\cite{Hoa:85}. Each equivalence has its own properties, making
it suitable for particular applications and verification
tasks~\cite{vGl:01}.

This paper focuses on \emph{conflict equivalence}, which compares processes
based on which other processes they can come into
conflict~\cite{CasLaf:99,RamWon:89} with. Two
processes are in conflict, if they can reach a state from which termination
is no longer possible. This can be because of \emph{deadlock} where
neither process is capable of doing anything, or \emph{livelock} where
the system continues to run without ever terminating.

It is difficult to reason about conflicts in a modular way. If two
processes are free from conflict individually, they may well be involved in
a conflict when running together, and vice versa~\cite{WonThiMalHoa:00}.
This makes it difficult to apply most methods of abstraction common in
model checking~\cite{BaiKat:08} to verify systems to be free from conflict,
and standard process-algebraic equivalences~\cite{vGl:01} are not
applicable either.

Conflict equivalence is introduced in~\cite{MalStrRee:06} as the
best possible process equivalence to reason compositionally about
conflicts. Conflict equivalence is coarser than observation
equivalence~\cite{Mil:89} and different from failures and trace
equivalence~\cite{Hoa:85}. The process-algebraic theory most closely
related to conflict equivalence is \emph{fair
testing}~\cite{BriRenVog:95,RenVog:07,NatCle:95}. The essential difference
between conflict equivalence and fair testing lies in the capability to
compare processes that exhibit blocking behaviour, as expressed by the
\emph{set of certain conflicts}~\cite{Mal:04,MalStrRee:06,Mal:10}.

In~\cite{FloMal:09,WarMal:10,SuSchRooHof:10}, various conflict-preserving
rewrite rules are used to simplify processes and check whether or not
large systems of concurrent finite-state automata are free from conflict.
While of good use in practice, the rewrite rules are incomplete, and it
remains an open question how processes can be normalised or compared for
conflict equivalence.

This paper improves on previous results about conflict equivalence and the
associated conflict preorder~\cite{MalStrRee:06}, and fair
testing~\cite{RenVog:07}, by providing a state-based characterisation of
the conflict preorder. It proposes \emph{less conflicting pairs} as a more
concrete way to compare processes for their conflicting behaviour than the
abstract test-based characterisation using \emph{nonconflicting
completions} in~\cite{MalStrRee:06} and the \emph{refusal trees}
of~\cite{RenVog:07}. Less conflicting pairs give a means to directly
compare processes based on their reachable state sets, which leads to an
alternative algorithm to test the conflict preorder. While still linear
exponential, this algorithm is simpler and has better time complexity than
the decision procedure for fair testing~\cite{RenVog:07}.

In the following, \sect~\ref{sec:preliminaries} briefly reviews the needed
terminology of languages, automata, and conflict equivalence. Then
\sect~\ref{sec:LC} introduces less conflicting pairs and shows how they can
be used to describe certain conflicts and the conflict preorder.
Afterwards, \sect~\ref{sec:MC} proposes an algorithm to calculate less
conflicting pairs for finite-state automata, and
\sect~\ref{sec:conclusions} adds some concluding remarks.


\section{Preliminaries}
\label{sec:preliminaries}

\subsection{Languages and Automata}

Event sequences and languages are a simple means to describe process
behaviours. Their basic building blocks are \emph{events}, which are taken
from a finite \emph{alphabet}~$\ACT$. Two special events are used, the
\emph{silent event}~$\tau$ and the \emph{termination event}~$\terminate$.
These are never included in an alphabet~$\ACT$ unless mentioned explicitly.

$\ACTstar$ denotes the set of all finite \emph{traces} of the form
$\ev_1 \ev_2 \cdots \ev_n$ of events from~$\ACT$, including the
\emph{empty trace}~$\varepsilon$.
The \emph{length} of trace~$s$ is denoted by $|s|$.
A subset $L \subseteq \ACTstar$ is called a \emph{language}.
The \emph{concatenation} of two traces $s,t \in \ACTstar$ is written
as~$st$, and a trace $s$ is called a \emph{prefix} of~$t$, written $s
\prefix t$, if $t=su$ for some trace~$u$. A language $L \subseteq \ACTstar$
is \emph{prefix-closed}, if $s \in L$ and $r \prefix s$ implies $r \in L$.

In this paper, process behaviour is modelled using nondeterministic
\emph{labelled transitions systems} or \emph{automata}
$A = \auttuple$, where
$\ACT$ is a finite alphabet of \emph{events},
$\Q$ is a set of \emph{states},
$\intrans \subseteq \Q \times (\ACT \cup \{\tau,\terminate\}) \times \Q$
is the \emph{state transition relation},
and $\Qi \subseteq \Q$ is the set of \emph{initial states}.
The automaton~$A$ is called \emph{finite-state} if its state set~$\Q$ is
finite.

The transition relation is written in infix notation $\state \trans[\ev]
\altstate$, and is extended to traces by
letting $\state \trans[\varepsilon] \state$ for all $\state \in \Q$, and
$\state \trans[s\ev] \altstate$ if $\state \trans[s] z \trans[\ev]
\altstate$ for some $z \in \Q$.
The transition relation must satisfy the additional
requirement that, whenever $\state \trans[\terminate] \altstate$, there
does not exist any outgoing transition from~$\altstate$.
The automaton~$A$ is called \emph{deterministic} if $|\initstateset| \leq 1$
and the transition relation contains no transitions labelled~$\tau$, and if
$x \trans[\ev] y_1$ and $x \trans[\ev] y_2$ always implies $y_1 =
y_2$.

To support silent transitions, $\state \ttrans[s] \altstate$, with $s \in
(\ACT \cup \{\terminate\})^*$, denotes the existence of a trace $t \in
(\ACT \cup \{\terminate,\tau\})^*$ such that $\state \trans[t] \altstate$,
and $s$ is obtained from~$t$ by deleting all $\tau$~events.
For a state set $X \subseteq \Q$ and a state
$y \in \Q$, the expression $X \ttrans[s] y$ denotes the existence of
$\state \in X$ such that $\state \ttrans[s] \altstate$, and $A \ttrans[s]
\altstate$ means that $\initstateset \ttrans[s] \altstate$. Furthermore,
$\state \ttrans \altstate$ denotes the existence of a trace~$s$ such that
$\state \ttrans[s] \altstate$, and $\state \ttrans[s]$ denotes the
existence of a state $\altstate \in \Q$ such that $\state \ttrans[s]
\altstate$. For a state, state set, or automaton~$\XX$, the \emph{language}
and the \emph{marked language} are
\begin{equation}
  \LANG(\XX) = \{\, s \in (\ACT\cup\{\terminate\})^* \mid \XX \ttrans[s] \,\}
  \qquad\mbox{and}\qquad
  \LANGM(\XX) = \LANG(\XX) \cap \ACTstar\terminate \ .
\end{equation}
Every prefix-closed language~$L$ is recognised by an automaton~$A$ such
that $\LANG(A) = L$, but only \emph{regular} languages are recognised by a
finite-state automaton~\cite{HopMotUll:01}.

When two automata are running in parallel, lock-step synchronisation in the
style of~\cite{Hoa:85} is used. The \emph{synchronous composition} of $A =
\langle\ACT_A\initphant\bcom \Q[A]\initphant\bcom \intrans\initphant_A\bcom
\Qi[A]\rangle$ and $B = \langle\ACT_B\initphant\bcom \Q[B]\initphant\bcom
\intrans\initphant_B\bcom \Qi[B]\rangle$ is
\begin{equation}
  A \sync B = \langle \ACT_A \cup \ACT_B, \Q[A] \times \Q[B],
  \intrans, \Qi[A] \times \Qi[B] \rangle
\end{equation}
where
$$
  \begin{array}{@{}r@{\quad}l@{}}
    (\state[A],\state[B])\trans[\sigma](\altstate[A],\altstate[B]) &
    \mbox{if}\ \sigma \in (\ACT_A \cap \ACT_B) \cup \{\terminate\},\
    \state[A] \trans[\sigma]_A \altstate[A],\ \mbox{and}\
    \state[B] \trans[\sigma]_B \altstate[B] \ ; \\
    (\state[A],\state[B]) \trans[\sigma] (\altstate[A],\state[B]) &
    \mbox{if}\ \sigma \in (\ACT_A \setminus \ACT_B) \cup \{\tau\}\
    \mbox{and}\ \state[A] \trans[\sigma]_A \altstate[A] \ ; \\
    (\state[A],\state[B]) \trans[\sigma] (\state[A],\altstate[B]) &
    \mbox{if}\ \sigma \in (\ACT_B \setminus \ACT_A) \cup \{\tau\}\
    \mbox{and}\ \state[B] \trans[\sigma]_B \altstate[B] \ .
  \end{array}
$$
In synchronous composition, shared events (including~$\terminate$) must be
executed by all automata together, while events used by only one of the
composed automata and silent ($\tau$) events are executed independently.


\subsection{Conflict Equivalence}
\label{subConflicts}

The key liveness property in supervisory control theory~\cite{RamWon:89} is
the \emph{nonblocking} property. Given an automaton~$A$, it is desirable
that every trace in~$\LANG(A)$ can be completed to a trace in~$\LANGM(A)$,
otherwise $A$ may become unable to terminate.
A process that may become unable to terminate is called
\emph{blocking}. This concept becomes more interesting when multiple
processes are running in parallel---in this case the term
\emph{conflicting} is used instead.

\begin{definition}
  An automaton $A = \auttuple$ is \emph{nonblocking} if for every state
  $\state \in Q$, $\Qi \ttrans \state$ implies that $\LANGM(\state) \neq
  \emptyset$. Otherwise $A$ is \emph{blocking}. Two automata $A$ and~$B$
  are \emph{nonconflicting} if $A \sync B$ is nonblocking, otherwise they
  are \emph{conflicting}.
\end{definition}

\begin{example}
\begin{figure}
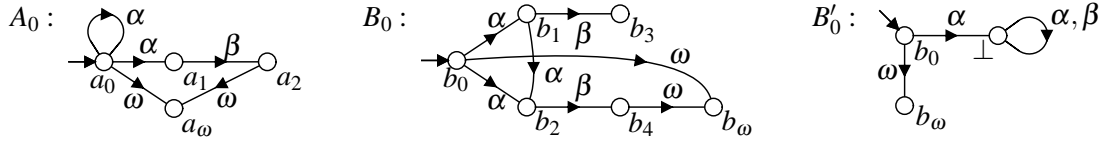

  \psfrag{a}{$\alpha$}
  \psfrag{ab}{$\alpha,\beta$}
  \psfrag{b}{$\beta$}
  \psfrag{o}{$\terminate$}
  \psfrag{a0}{$a_0$}
  \psfrag{a1}{$a_1$}
  \psfrag{a2}{$a_2$}
  \psfrag{aw}{$a_\terminate$}
  \psfrag{b0}{$b_0$}
  \psfrag{b1}{$b_1$}
  \psfrag{b2}{$b_2$}
  \psfrag{b3}{$b_3$}
  \psfrag{b4}{$b_4$}
  \psfrag{bw}{$b_\terminate$}
  \psfrag{dump}{$\bot$}
  \centering
  \autbox{A0}{$A_0:$}\qquad
  \autbox{B0}{$B_0:$}\qquad
  \autbox{B0conf}{$B'_0:$}%
  \caption{Examples of blocking and nonblocking automata.}
  \label{fig:blocking}
\end{figure}
  Automaton~$A_0$ in \fig~\ref{fig:blocking} is nonblocking, as it is
  always possible to reach state~$a_2$ and terminate.
  Automaton~$B_0$ on the other hand  is blocking, because it can
  enter state~$b_3$ after execution of $\alpha\beta$, from where it is no
  longer possible to reach a state where the termination event~\terminate\
  is enabled.
\end{example}

For an automaton to be nonblocking, it is enough that a terminal state
\emph{can} be reached from \emph{every} reachable state. There is no
requirement for termination to be guaranteed. For example, automaton~$A_0$
in \fig~\ref{fig:blocking} is nonblocking despite the presence of a
possibly infinite loop of $\alpha$-transitions in state~$a_0$. Nonblocking
is also different from ``may''-testing~\cite{RenVog:07}, which only
requires the possibility of termination from the initial state. The testing
semantics most similar to nonblocking is ``should''-testing, which is also
known as \emph{fair testing}~\cite{RenVog:07}.

To reason about nonblocking in a compositional way, the notion of
\emph{conflict equivalence} is developed in~\cite{MalStrRee:06}. According
to process-algebraic testing theory, two automata are considered as
equivalent if they both respond in the same way to all tests of a certain
type~\cite{DeNHen:84}. For conflict equivalence, a \emph{test} is an
arbitrary automaton, and the \emph{response} is the observation whether or
not the test is conflicting with the automaton in question.

\begin{definition}
  \label{def:confeq}
  Let $A$ and~$B$ be two automata.
  $A$~is \emph{less conflicting} than~$B$, written $A \confle B$, if, for
  every automaton~$T$, if $B \sync T$ is nonblocking then $A \sync T$ also
  is nonblocking. $A$ and~$B$ are \emph{conflict equivalent}, $A \confeq
  B$, if $A \confle B$ and $B \confle A$.
\end{definition}

\begin{figure}
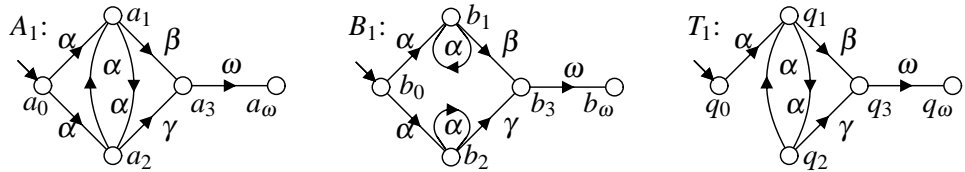

  \centerline{%
  \psfrag{a}{$\alpha$}
  \psfrag{b}{$\beta$}
  \psfrag{c}{$\gamma$}
  \psfrag{o}{\terminate}
  \tabcolsep1em
  \begin{tabular}{lll}
    $A_1$: & $B_1$: & $T_1$: \\
    \noalign{\vskip-4ex}
    \psfrag{a0}{$a_0$}%
    \psfrag{a1}{$a_1$}%
    \psfrag{a2}{$a_2$}%
    \psfrag{a3}{$a_3$}%
    \psfrag{aw}{$a_\terminate$}%
    \autgraphics{A1} &
    \psfrag{b0}{$b_0$}%
    \psfrag{b1}{$b_1$}%
    \psfrag{b2}{$b_2$}%
    \psfrag{b3}{$b_3$}%
    \psfrag{bw}{$b_\terminate$}
    \autgraphics{B1} &
    \psfrag{S0}{$q_0$}%
    \psfrag{S1}{$q_1$}%
    \psfrag{S2}{$q_2$}%
    \psfrag{S3}{$q_3$}%
    \psfrag{qw}{$q_\terminate$}%
    \autgraphics{T1} \\
  \end{tabular}}
  \caption{Two automata that are not conflict equivalent.}
  \label{fig:notconfeq}
\end{figure}

\begin{example}
  Consider automata $A_1$ and~$B_1$ in \fig~\ref{fig:notconfeq}. $A_1$~is
  \emph{not} less conflicting than~$B_1$, since $A_1 \sync T_1$ is blocking
  while $B_1 \sync T_1$ is nonblocking. This is because $A_1 \sync T_1$ can
  enter the blocking state~$(a_2,q_1)$ after executing of~$\alpha$, whereas
  after executing~$\alpha$ in~$B_1$, it eventually becomes possible to
  continue using either the $\beta$- or~$\gamma$-transition of~$T_1$. It
  can also be shown that $B_1 \confle A_1$ does not hold.
\end{example}

The properties of the conflict preorder~$\confle$ and of conflict
equivalence and their relationship to other process-algebraic relations are
studied in~\cite{MalStrRee:06}. It is enough to consider deterministic
tests in \defn~\ref{def:confeq}, and conflict equivalence is is the
coarsest possible congruence with respect to synchronous composition that
respects blocking, making it an ideal equivalence for use in compositional
verification~\cite{FloMal:09,WarMal:10}.

\subsection{The Set of Certain Conflicts}
\label{sub:CONF}

Every automaton can be associated with a language of \emph{certain
conflicts}, which plays an important role in conflict
semantics~\cite{Mal:04}.

\begin{definition}
  For an automaton $A = \auttuple$, write
  \begin{align}
    \CONF(A)  &= \{\, s \in \ACTstar  \mid
                      \mbox{For every automaton $T$ such that $T \ttrans[s]$,
                            $A \sync T$ is blocking} \,\}\ ; \\
    \NCONF(A) &= \{\, s \in \ACTstar \mid
                      \mbox{There exists an automaton $T$ such that
                            $T \ttrans[s]$ and $A \sync T$
                            is nonblocking} \,\}\ .
  \end{align}
\end{definition}

$\CONF(A)$ is the set of \emph{certain conflicts} of~$A$.
It contains all traces that, when possible in the environment,
necessarily cause blocking.
Its complement $\NCONF(A)$ is the most general behaviour of processes
that are to be nonconflicting with~$A$.
If $A$ is nonblocking, then $\CONF(A) = \emptyset$ and $\NCONF(A) =
\ACTstar$, because in this case $A \sync U$ is nonblocking, where $U$
is a deterministic automaton such that $\LANGM(U) = \ACTstar\terminate$.
The set of certain conflicts becomes more interesting for
blocking automata.

\begin{example}
  \label{ex:certainconf}
  Consider again automaton~$B_0$ in \fig~\ref{fig:blocking}.
  Clearly $\alpha\beta \in \CONF(B_0)$ as $B_0$ can enter the deadlock
  state~$b_3$ by executing~$\alpha\beta$, and therefore every test~$T$
  that can execute~$\alpha\beta$ is conflicting with~$B_0$.
  But also $\alpha \in \CONF(B_0)$, because $B_0$ can enter state~$b_2$
  by executing~$\alpha$, from where the only possibility to terminate
  is by executing~$\beta\terminate$. So any test that can execute~$\alpha$
  also needs to be able to execute~$\alpha\beta$ if it is to be
  nonconflicting with~$B_0$; but such a test is conflicting with~$B_0$ as
  explained above. It can be shown that $\CONF(B_0) = \alpha\ACTstar$.
\end{example}

The set of certain conflicts is introduced in~\cite{Mal:04}, and its
properties and its relationship to conflict equivalence are studied
in~\cite{MalStrRee:06}. Even if an automaton is nondeterministic, its set
of certain conflicts is a \emph{language}, but as shown in
\examp~\ref{ex:certainconf}, it is not necessarily a subset of the
language~$\LANG(A)$ of its automaton. If a trace~$s$ is a trace of certain
conflicts, then so is any extension~$st$. An algorithm to compute the set
of certain conflicts for a given finite-state automaton is presented
in~\cite{Mal:10}.

Certain conflicts constitute the main difference between conflict
equivalence and \emph{fair testing}~\cite{RenVog:07}. In fair testing,
processes are not allowed to synchronise on the termination
event~\terminate, so termination is determined solely by the test. This can
be expressed as conflict equivalence by requiring that \terminate\ be
enabled in all states of the automata compared~\cite{MalStrRee:06}.

Conversely, it is possible to factor out certain conflicts from any given
automaton, by redirecting all traces of certain conflicts to a single
state~\cite{Mal:04,Mal:10}. For example, automaton~$B_0$ in
\fig~\ref{fig:blocking} can be replaced by the conflict equivalent
automaton~$B'_0$, which uses the single deadlock state~$\bot$. Two automata
$A$ and~$B$ are conflict equivalent if and only if their normalised forms
$A'$ and~$B'$ are fair testing equivalent. The decision procedure for fair
testing~\cite{RenVog:07} can be used to test the conflict preorder, and
vice versa.


\section{Characterising the Conflict Preorder}
\label{sec:LC}

This section is concerned about characterising two automata $A$ and~$B$ as
conflict equivalent, or characterising $A$ as less conflicting than~$B$, in
a state-based way.
First, \ref{sub:understanding} explains the crucial properties of conflict
equivalence using examples. \emph{Less conflicting pairs} are introduced
in~\ref{sub:LC}, and they are used to characterise certain conflicts
in~\ref{sub:LC:CONF} and the conflict preorder in~\ref{sub:LC:confle}.

\subsection{Understanding Conflict Equivalence}
\label{sub:understanding}

Every reachable state of an automaton~$A$ carries a \emph{nonblocking
requirement} (also known as a \emph{nonconflicting
completion}~\cite{MalStrRee:06}) that needs to be satisfied by tests that
are to be nonconflicting with~$A$. For example, if $A \ttrans[s] x_A$, then
every test~$T$ that can execute~$s$ needs to be able to continue with at
least one trace $t \in \LANGM(x_A)$, or $T$ is conflicting with~$A$. An
automaton~$A$ is less conflicting than another automaton~$B$, if every
nonblocking requirement associated with~$A$ also is a nonblocking
requirement associated with~$B$.

\begin{figure}
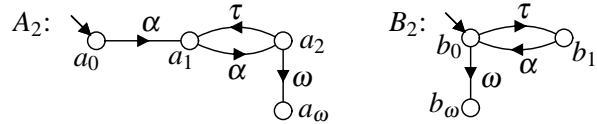

  \centerline{%
  \psfrag{a}{$\alpha$}
  \psfrag{t}{$\tau$}
  \psfrag{o}{\terminate}
  \tabcolsep1em
  \begin{tabular}{ll}
    $A_2$: & $B_2$: \\
    \noalign{\vskip-4ex}
    \psfrag{a0}{$a_0$}%
    \psfrag{a1}{$a_1$}%
    \psfrag{a2}{$a_2$}%
    \psfrag{aw}{$a_\terminate$}%
    \kern1.7em\autgraphics{A2} &
    \psfrag{b0}{$b_0$}%
    \psfrag{b1}{$b_1$}%
    \psfrag{bw}{$b_\terminate$}
    \quad\autgraphics{B2} \\
  \end{tabular}}
  \caption{Two automata that are conflict equivalent.}
  \label{fig:confeq}
\end{figure}

\begin{example}
  Consider again automata $A_1$ and~$B_1$ in \fig~\ref{fig:notconfeq}.
  They have the same marked languages.
  Thus, if the initial state~$a_0$ of~$A_1$ is
  blocking in combination with some test~$T$, then so is the initial
  state~$b_0$ of~$B_1$. But this is not the case when $A_1 \sync T$ enters
  a state $(a_1, x_T)$ after execution of~$\alpha$. State~$a_1$ requires
  $x_T$ to be capable of performing at least one trace from the language
  $\LANGM(a_1) = (\alpha\alpha)^*\beta\terminate +
  (\alpha\alpha)^*\alpha\gamma\terminate$, whereas the states $b_1$
  and~$b_2$, which can both be entered after executing~$\alpha$, require
  a trace from the language $\alpha^*\beta\omega$ and
  $\alpha^*\gamma\omega$, respectively. Both of these languages contain
  traces outside of the language~$\LANGM(a_1)$. Automaton~$T_1$ in
  \fig~\ref{fig:notconfeq} is in conflict with $A_1$ but not with~$B_1$.
\end{example}

In general, it is not enough to compare only the marked languages of states
reached by equal traces. Not every nonblocking requirements is a marked
language of some state of its automaton. The following example shows one of
the problems.

\begin{example}
  Consider automata $A_2$ and~$B_2$ in \fig~\ref{fig:confeq}. The marked
  language of the initial state of~$A_2$ is $\LANGM(a_0) =
  \alpha\alpha^+\terminate$, while the marked languages of the two states
  in~$B_2$ that can be entered initially are $\LANGM(b_0) =
  \alpha^*\terminate$ and $\LANGM(b_1) = \alpha^+\terminate$.
  Although the marked languages are different, for any
  automaton~$T$, if $B_2 \sync T$ is nonblocking, then $A_2 \sync T$ must
  also be nonblocking.
  If $T$ is to be nonconflicting in combination with~$B_2$, since $B_2$ may
  initially enter state~$b_1$, there must be the possibility to continue
  with event~$\alpha$. However, after executing $\alpha$, automaton $B_2$ may
  again silently enter state~$b_1$, which means that $\alpha$ must be possible
  again. This is enough to ensure that $A_2 \sync T$ is nonblocking.
  Using this argument, it can be shown that $A_2$ and~$B_2$ are conflict
  equivalent.
\end{example}

\subsection{Less Conflicting Pairs}
\label{sub:LC}

In order to compare two nondeterministic automata according to conflicts,
it is necessary to identify sets of states the two automata may reach
under the same input. This is done using the well-known \emph{subset
  construction}~\cite{HopMotUll:01}. To capture termination, the usual
powerset state space is extended by a special state~$\terminate$ entered
only after termination.

\begin{definition}
  \label{def:ddet}
  The \emph{deterministic state space} of automaton~$A = \auttuple$ is
  \begin{equation}
    \Qdet[A] = \power\Q \cup \{\terminate\} \ ,
  \end{equation}
  and the \emph{deterministic transition function} $\ddet[A]\colon \Qdet
  \times (\ACT \cup \{\terminate\}) \to \Qdet$ for~$A$ is defined as
  \begin{equation}
    \ddet[A](X,\ev) =
                      \begin{cases}
                         \terminate, & \mbox{if}\ \ev=\terminate\
                                      \mbox{and}\ X \ttrans[\terminate]; \\
                         \{\, y \in \stateset \mid X \ttrans[\sigma] y \,\},
                                    & \mbox{otherwise}.
                      \end{cases}
  \end{equation}
\end{definition}

The deterministic transition function~$\ddet[A]$ is extended to traces $s
\in \ACTstar \cup \ACTstar\terminate$ in the standard way. Note that
$\ddet[A](X,s)$ is defined for every trace $s \in \ACTstar \cup
\ACTstar\terminate$; if none of the states in~$X$ accepts the trace~$s$,
this is indicated by $\ddet[A](X,s) = \emptyset$. This is also true for
termination: if $\terminate$ is enabled in some state in~$X$, then
$\ddet[A](X,\terminate) = \terminate$, otherwise $\ddet[A](X,\terminate) =
\emptyset$.

In order to compare two automata $A$ and~$B$ with respect to possible
conflicts, \emph{pairs} of state sets of the subset construction of $A$
and~$B$ need to be considered. Therefore, the deterministic transition
function is also applied to pairs $\XX = (X_A,X_B)$ of state sets $X_A
\subseteq \Q[A]$ and $X_B \subseteq \Q[B]$,
\begin{equation}
  \ddet[A,B](\XX,s) = \ddet[A,B](X_A,X_B,s) =
  (\ddet[A](X_A,s), \ddet[B](X_B,s)) \ .
\end{equation}

To determine whether $A \confle B$, it is necessary to check all states
$x_A \in \Q[A]$ against matching state sets $X_B \subseteq \Q[B]$ and
determine whether all possible conflicts of~$x_A$ are also present
in~$X_B$. For example, when automaton~$A_2$ in \fig~\ref{fig:confeq} is in
state~$a_1$, then $B_2$ may be in $b_0$ or~$b_1$. In
state~$a_1$, at least one of the traces in $\alpha^+\terminate$ needs to be
enabled to avert blocking, and the same requirement to avert blocking is
seen in state~$b_1$. When state~$a_1$ is entered with some test~$T$,
blocking occurs if none of the traces in~$\alpha^+\terminate$
is enabled, and such a test~$T$ is also blocking when combined with a system
that may be in $b_0$ or~$b_1$. Therefore, $a_1$ is considered
in the following as \emph{less conflicting}~(\LC) than~$\{b_0,b_1\}$.

It cannot always be determined directly whether a state $x_A \in \Q[A]$ is
less conflicting than a state set $X_B \subseteq \Q[B]$. In some cases, it
is necessary also to consider the deterministic successors of $x_A$ and~$X_B$.
Therefore, the following definition considers pairs~$(X_A,X_B)$ of state
sets.

\begin{definition}
  \label{def:LC}
  Let $A = \auttuple[A]$ and $B = \auttuple[B]$ be automata. The set
  $\LC(A,B) \subseteq \Qdet[A] \times \Qdet[B]$ of \emph{less conflicting
  pairs} for $A$ and~$B$ is inductively defined by
  \begin{align}
    \label{eq:LC:0}
    \LC^0(A,B) &= \{\terminate\} \times \Qdet[B] \;\cup\;
                  \{\, (X_A,X_B) \mid X_B \subseteq \stateset[B]\
                       \mbox{and there exists}\
                       x_B \in X_B\ \mbox{with}\
                       \LANGM(x_B) = \emptyset\,\}\ ; \\
    \label{eq:LC:succ}
    \LC^{n+1}(A,B) &=
       \LongSet{11em}{$(X_A,X_B)$ $\mid$
                     there exists $x_B \in X_B$ such that for all
                     $t \in \ACTstar$, if $x_B \ttrans[t\terminate]$
                     then there exists $r \prefix t\terminate$
                     such that $\ddet[A,B](X_A, X_B, r) \in \LC^i(A,B)$
                     for some $i \leq n$}; \\
    \label{eq:LC:union}
    \LC(A,B) &= \bigcup_{n \geq 0} \LC^n(A,B)\ .
  \end{align}
\end{definition}

\begin{remark}
  \label{rem:notLC}
  If $(X_A,X_B) \notin \LC(A,B)$,
  then according to~\eqref{eq:LC:succ},
  for every state $x_B \in X_B$, there exists $t \in \ACTstar$
  such that $x_B \ttrans[t\terminate]$, and $\ddet(X_A,X_B,r) \notin
  \LC(A,B)$ for all prefixes $r \prefix t\terminate$.
\end{remark}

The idea of \defn~\ref{def:LC} is to classify a pair~$(X_A,X_B)$ as less
conflicting, if the marked language of~$X_A$ is a \emph{nonconflicting
completion}~\cite{MalStrRee:06} for the process with initial states~$X_B$.
That is, every test that is nonconflicting in combination with each of the
states in~$X_B$ can terminate with at least one trace from the marked
language of~$X_A$. Or conversely, every test that cannot terminate using
any of the traces in the marked language of~$X_A$ also is conflicting
with~$X_B$ (see \lemm~\ref{lem:LC1} below).

The first state set~$X_A$ of a pair $(X_A,X_B)$ is just used to represent a
\emph{language} of possible completions. If state sets $X_A$ and~$Y_A$ have
the same languages, then all pairs $(X_A,X_B)$ and~$(Y_A,X_B)$ have exactly
the same less conflicting status. For the second state set~$X_B$ on the
other hand, the complete nondeterministic behaviour is relevant.

A pair~$(\terminate,X_B)$ is considered as ``less
conflicting''~\eqref{eq:LC:0}, since termination has already been achieved
in~$A$. If $X_B$ contains a state $x_B$ such that
$\LANGM(x_B) = \emptyset$, then $(X_A,X_B)$ also is less
conflicting~\eqref{eq:LC:0}, because conflict is guaranteed in~$X_B$. For
other pairs~$(X_A,X_B)$, it must be checked whether $X_B$ contains a
requirement to avert blocking matching that given by the language
of~$X_A$~\eqref{eq:LC:succ}.

\begin{example}
\label{ex:LC:0}
Consider again automata $A_0$ and~$B_0$ in \fig~\ref{fig:blocking}. It
holds that $(\{a_0\}, \{b_0\}) \in \LC^1(A_0,B_0)$. There are three ways
to terminate from~$b_0$, by executing \terminate\ or
$\alpha\beta\terminate$ or~$\alpha\alpha\beta\terminate$. All three traces
are possible in~$a_0$, each taking the pair $(\{a_0\}, \{b_0\})$ to the
deterministic successor $(\terminate,\terminate) \in \LC^0(A_0,B_0)$. This
is enough to confirm that \eqref{eq:LC:succ} is satisfied.

On the other hand, $(\{a_0\}, \{b_2\}) \notin \LC^1(A_0,B_0)$. From
state~$a_0$, blocking occurs with a test~$T$ that can only execute
$\beta\terminate$, but this test is nonblocking with~$b_2$. It holds
that $b_2 \trans[\beta\terminate]$, where trace $\beta\terminate$ has the
prefixes $\varepsilon$, $\beta$, and~$\beta\terminate$, but
$\ddet[A_0,B_0](\{a_0\}, \{b_2\}, \varepsilon) = (\{a_0\}, \{b_2\}) \notin
\LC^0(A_0,B_0)$, $\ddet[A_0,B_0](\{a_0\}, \{b_2\}, \beta) =
(\emptyset,\{b_4\}) \notin \LC^0(A_0,B_0)$, and $\ddet[A_0,B_0](\{a_0\},
\{b_2\}, \beta\terminate) = (\emptyset,\terminate) \notin \LC^0(A_0,B_0)$.
Therefore, \eqref{eq:LC:succ}~is not satisfied and $(\{a_0\}, \{b_2\})
\notin \LC^1(A_0,B_0)$. It can also be shown that $(\{a_0\}, \{b_2\})
\notin \LC(A_0,B_0)$.
\end{example}

For a \emph{level-1} less conflicting pair $(X_A,X_B) \in \LC^1(A,B)$, if
$X_B$ does not contain blocking states, then there must exist a state $x_B
\in X_B$ such that $\LANGM(x_B) \subseteq \LANGM(X_A)$. This is not the
case for every less conflicting pair, as some nonblocking requirements are
only implicitly contained in the automaton. To show that $(X_A,X_B)$ is a
less conflicting pair, it is enough to find a state in~$x_B \in X_B$ that
can cover an initial segment of~$\LANGM(X_A)$, as long as a less
conflicting pair of a \emph{lower level} is reached afterwards.

\begin{example}
  \label{ex:LC:succ}
  Consider again automata $A_2$ and~$B_2$ in \fig~\ref{fig:confeq}. By
  definition, $(\terminate,\terminate) \in \LC^0(A_2,B_2)$, and following
  from this, $(\{a_1\}, \{b_0,b_1\}) \in \LC^1(A_2,B_2)$, because the
  marked language of~$a_1$ is $\alpha^+\terminate$, which also is the
  marked language of~$b_1$.

  Now consider the pair $(\{a_0\}, \{b_0,b_1\})$. State~$a_0$ has the marked
  language $\alpha\alpha^+\terminate$, i.e., to avert
  blocking from~$a_0$, a test must be able to execute at least one of the
  traces in~$\alpha\alpha^+\terminate$. Although this language is not
  directly associated with any state in~$B_2$, the nonblocking requirement
  is implicitly present in state~$b_1$. If blocking is to be averted from
  state~$b_1$, event~$\alpha$ must be possible. After executing~$\alpha$,
  state~$b_0$ is entered, from where it is always possible to silently
  return to state~$b_1$ with marked language~$\alpha^+\terminate$.
  Therefore, in order to avert blocking from state~$b_1$, it is necessary
  to execute~$\alpha$ and afterwards be able to terminate using one of the
  traces in~$\alpha^+\terminate$. This amounts to the implicit nonblocking
  requirement to execute a trace from~$\alpha\alpha^+\terminate$ in
  state~$b_1$.

  Therefore $(\{a_0\}, \{b_0,b_1\}) \notin \LC^1(A_2,B_2)$, but $(\{a_0\},
  \{b_0,b_1\}) \in \LC^2(A_2,B_2)$ according to~\eqref{eq:LC:succ}: every
  trace that leads to a terminal state from state~$b_1$ has the
  prefix~$\alpha$, and $\ddet[A_2,B_2](\{a_0\}, \{b_0,b_1\}, \alpha) =
  (\{a_1\}, \{b_0,b_1\}) \in \LC^1(A_2,B_2)$.
\end{example}

As shown in the example, some nonblocking requirements have to be
constructed using a saturation operation that combines two previously found
nonblocking requirements. The level~$n$ of a less conflicting pair
$(X_A,X_B) \in \LC^n(A,B)$ represents the nesting depth of applications of
this saturation operation.


The following two lemmas relate the state-based definition of less
conflicting pairs to possible tests and thus to the conflict preorder.
A pair~$(X_A,X_B)$ is a less conflicting pair, if every test~$T$ such that
$\LANGM(X_A) \cap \LANGM(T) = \emptyset$ also is conflicting with~$X_B$.

\begin{lemma}
  \label{lem:LC1}
  Let $A = \auttuple[A]$, $B = \auttuple[B]$, and
  $T = \auttuple[T]$ be automata, and let $x_T \in \stateset[T]$
  be a (possibly unreachable) state.
  For every less conflicting pair $(X_A,X_B) \in \LC(A,B)$,
  at least one of the following conditions holds.
  \begin{enumerate}
  \item \label{it:LC1:nonblocking1}
        $X_A = \terminate$, or
        $X_A \subseteq \stateset[A]$ and there exists $x_A \in X_A$
        such that $\LANGM(x_A,x_T) \neq \emptyset$.
  \item \label{it:LC1:blocking2}
        There exist states $x_B \in X_B$, $y_B \in \Q[B]$, and
	$y_T \in \Q[T]$ such that
        $(x_B,x_T) \ttrans (y_B,y_T)$ and $\LANGM(y_B,y_T) = \emptyset$.
  \end{enumerate}
  (Here and in the following, notation $\LANGM(x_A, x_T)$ is abused to be a
  shorthand for $\LANGM((x_A, x_T))$.)
\end{lemma}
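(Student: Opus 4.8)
The plan is to prove this by induction on the level $n$ of the less conflicting pair $(X_A,X_B) \in \LC^n(A,B)$, matching the inductive structure of \defn~\ref{def:LC}. The two conditions in the lemma are mutually exclusive in spirit but the claim is only that at least one holds, so the induction will show that whenever condition~\ref{it:LC1:blocking2} fails (no reachable deadlock in the synchronous product of $X_B$ and $x_T$), condition~\ref{it:LC1:nonblocking1} must hold (some $x_A \in X_A$ can jointly terminate with $x_T$). First I would handle the base case $n=0$. If $X_A = \terminate$, then condition~\ref{it:LC1:nonblocking1} holds immediately. Otherwise $(X_A,X_B) \in \LC^0(A,B)$ means, by~\eqref{eq:LC:0}, that there is some $x_B \in X_B$ with $\LANGM(x_B) = \emptyset$; taking $y_B = x_B$ and $y_T = x_T$ with the empty trace gives $\LANGM(x_B,x_T) = \emptyset$ (since $\LANGM(x_B) = \emptyset$ already blocks the product), so condition~\ref{it:LC1:blocking2} holds.

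For the inductive step, suppose the claim holds for all pairs in $\LC^i(A,B)$ with $i \leq n$, and take $(X_A,X_B) \in \LC^{n+1}(A,B)$. By~\eqref{eq:LC:succ} there is a witnessing state $x_B \in X_B$ such that every terminating trace $t\terminate$ from $x_B$ has a prefix $r \prefix t\terminate$ with $\ddet[A,B](X_A,X_B,r) \in \LC^i(A,B)$ for some $i \leq n$. The strategy is to assume condition~\ref{it:LC1:blocking2} fails and derive condition~\ref{it:LC1:nonblocking1}. Assuming no reachable blocking state exists in the product of $x_B$ (hence of $X_B$) with $x_T$, I would use the fact that $(x_B,x_T)$ is then nonblocking to find a terminating trace from $(x_B,x_T)$. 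Projecting this trace onto $B$ yields some $t\terminate \in \LANGM(x_B)$ that is simultaneously executable with $x_T$. Applying the witnessing property to this particular $t\terminate$ produces a prefix $r$ reaching a lower-level pair $\ddet[A,B](X_A,X_B,r) = (X_A',X_B') \in \LC^i(A,B)$, where $X_A' = \ddet[A](X_A,r)$ and $X_B' = \ddet[B](X_B,r)$.

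The key move is then to apply the induction hypothesis to this lower-level pair together with the state $x_T'$ of $T$ reached after reading $r$ along the chosen joint run. Because $x_T$ and $x_T'$ are connected by $r$ and the chosen run continues past $r$ without blocking, condition~\ref{it:LC1:blocking2} must still fail for $(X_A',X_B')$ with the tester state $x_T'$ (any reachable deadlock there would pull back to a reachable deadlock from $(x_B,x_T)$, contradicting our assumption). Hence by induction condition~\ref{it:LC1:nonblocking1} holds for $(X_A',X_B')$: either $X_A' = \terminate$ or some $x_A' \in X_A'$ satisfies $\LANGM(x_A',x_T') \neq \emptyset$. In either case I would lift this back through $r$: since $X_A' = \ddet[A](X_A,r)$, any state $x_A'$ witnessing termination from $x_T'$ is reached from some $x_A \in X_A$ via $r$, and prepending $r$ to a joint terminating trace of $(x_A',x_T')$ gives a joint terminating trace of $(x_A,x_T)$, establishing $\LANGM(x_A,x_T) \neq \emptyset$.

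The main obstacle I anticipate is the careful bookkeeping of the $T$-component across the prefix $r$: the statement quantifies over a fixed, possibly unreachable state $x_T$, and I must track how the subset construction on $A$ and $B$ stays synchronised with an actual nondeterministic run of $T$ so that the induction hypothesis can legitimately be invoked at the state $x_T'$ reached after $r$. In particular, I need to verify that the failure of the blocking condition~\ref{it:LC1:blocking2} genuinely propagates to the successor pair $(X_A',X_B')$ at $x_T'$ — that is, that I have not lost or gained reachable deadlocks by advancing along $r$ — and that the joint termination witnessed at the lower level correctly reassembles into a joint termination from $(x_A,x_T)$ by concatenation with $r$. This reassembly relies on the definition of $\ddet[A]$ as exactly the set of $A$-states reachable under $r$, which guarantees the required $x_A \ttrans[r] x_A'$ transition exists.
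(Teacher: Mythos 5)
Your proposal is correct and follows essentially the same route as the paper's proof: induction on the level $n$, with the base case read off from~\eqref{eq:LC:0}, and the inductive step using the witness state $x_B$ from~\eqref{eq:LC:succ}, a joint terminating trace of $(x_B,x_T)$, a prefix $r$ reaching a lower-level pair, the inductive hypothesis applied at the $T$-state reached after $r$, and a lift-back through $r$. The only difference is that you phrase the inductive step contrapositively (assume condition~\ref{it:LC1:blocking2} fails and derive condition~\ref{it:LC1:nonblocking1}), whereas the paper first dispatches the case $\LANGM(x_B,x_T)=\emptyset$ and then does a direct case split on which condition the inductive hypothesis yields for the successor pair; the two formulations are logically equivalent.
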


\begin{proof}
  As $(X_A,X_B)$ is a less conflicting pair,
  it holds that $(X_A,X_B) \in \LC^n(A,B)$ for some $n \in \NAT$.
  The claim is shown by induction on~$n$.

  If $(X_A,X_B) \in \LC^0(A,B)$ then by~\eqref{eq:LC:0} it holds that
  $X_A = \terminate$, or $X_B \subseteq \stateset[B]$ and there exists
  $x_B \in X_B$ such that $\LANGM(x_B) = \emptyset$. In the first case
  \refi{it:LC1:nonblocking1} holds, and in the second case
  \refi{it:LC1:blocking2} holds as $(x_B,x_T) \trans[\varepsilon] (x_B,x_T)$
  and $\LANGM(x_B,x_T) = \LANGM(x_B) \cap \LANGM(x_T) = \emptyset$.

  Now assume the claim holds for all $i \leq n$, i.e., for all
  $(X_A,X_B) \in \LC^i(A,B)$, one of the conditions
  \refi{it:LC1:nonblocking1} or~\refi{it:LC1:blocking2} holds,
  and consider $(X_A,X_B) \in \LC^{n+1}(A,B)$.
  By~\eqref{eq:LC:succ}, there exists $x_B \in X_B$ such that for all $t
  \in \ACTstar$, if $x_B \ttrans[t\terminate]$ then there exists a prefix
  $r \prefix t\terminate$ such that $\ddet[A,B](X_A, X_B, r) \in \LC^i(A,B)$
  for some $i \leq n$.
  If $\LANGM(x_B,x_T) = \emptyset$, \refi{it:LC1:blocking2}~follows immediately
  as $(x_B,x_T) \trans[\varepsilon] (x_B,x_T)$.
  Therefore assume that $\LANGM(x_B,x_T) \neq \emptyset$,
  i.e., there exists $t \in \ACTstar$ such that $(x_B,x_T)
  \ttrans[t\terminate]$. Then $x_B \ttrans[t\terminate]$,
  so there exists $r \prefix t\terminate$ such that $\ddet[A,B](X_A, X_B, r)
  \in  \LC^i(A,B)$ for some $i \leq n$.
  As $r \prefix t\terminate$ and $x_T \ttrans[t\terminate]$,
  it also holds that $x_T \ttrans[r] y_T$ for some $y_T \in \stateset[T]$.
  Let $\ddet[A,B](X_A,X_B,r) = (Y_A,Y_B)$.
  By inductive assumption, \refi{it:LC1:nonblocking1}
  or~\refi{it:LC1:blocking2} holds for $(Y_A, Y_B) \in \LC^i(A,B)$
  and~$y_T$.

  \refi{it:LC1:nonblocking1}
  In this case, either $Y_A = \terminate$, or $Y_A \subseteq \stateset[A]$
  and there exists $y_A \in Y_A$ and $u \in \ACTstar$ such that
  $(y_A,y_T) \ttrans[u\terminate]$.
  If $Y_A = \terminate$, then
  $\ddet[A](X_A,r) = Y_A = \terminate$ and according to \defn~\ref{def:ddet}
  there exists $r_A \in \ACTstar$ such that $r = r_A\terminate$,
  and there exist states $x_A \in X_A$ and $y_A \in \Q[A]$ such that
  $x_A \ttrans[r_A] y_A \ttrans[\terminate]$, i.e.,
  $(x_A,x_T) \ttrans[r_A\terminate]$.
  If there exists $y_A \in Y_A$ and $u \in
  \ACTstar$ such that $(y_A,y_T) \ttrans[u\terminate]$, then
  since $\ddet[A](X_A,r) = Y_A$, there exists $x_A \in X_A$ such that
  $x_A \ttrans[r] y_A$, i.e.,
  $(x_A,x_T) \ttrans[r] (y_A,y_T) \ttrans[u\terminate]$.
  In both cases, \refi{it:LC1:nonblocking1}~holds for $(X_A,X_B)$ and~$x_T$.

  \refi{it:LC1:blocking2} If there exists a state $y_B \in Y_B$ such that
  $(y_B, y_T) \ttrans (z_B, z_T)$ where $\LANGM(z_B,z_T) = \emptyset$, then since
  $\ddet[B](X_B,r) = Y_B$, there exists $x_B \in X_B$ such that $x_B
  \ttrans[r] y_B$, which implies $(x_B,x_T) \ttrans[r] (y_B,y_T) \ttrans
  (z_B, z_T)$ with $\LANGM(z_B,z_T) = \emptyset$. Thus,
  \refi{it:LC1:blocking2}~holds for $(X_A,X_B)$ and~$x_T$.
\end{proof}

Conversely, if a pair of state sets is \emph{not} a less conflicting pair
for $A$ and~$B$, then this pair gives rise to a test automaton to show that
$A$ is not less conflicting than~$B$. This test exhibits blocking behaviour
in combination with~$A$ but not with~$B$.

\begin{lemma}
  \label{lem:LC2}
  Let $A = \auttuple[A]$ and $B = \auttuple[B]$ be automata.
  For every pair $\XX = (X_A,X_B) \notin \LC(A,B)$,
  there exists a deterministic automaton $T_\XX = \detauttuple[T]$
  such that both the following conditions hold.
  \begin{enumerate}
  \item \label{it:LC2:G1}
    For all states $x_A \in X_A$, it holds that $\LANGM(x_A\initphant,
    \initstate[T]) = \emptyset$.
  \item \label{it:LC2:G2}
    For all states $x_B \in X_B$, $y_B \in \Q[B]$, $y_T \in \stateset[T]$
    such that $(x_B\initphant, \initstate[T]) \ttrans (y_B, y_T)$, it holds
    that $\LANGM(y_B, y_T) \neq \emptyset$.
  \end{enumerate}
\end{lemma}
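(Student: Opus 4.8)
The plan is to construct $T_\XX$ directly as a subset-construction-style automaton whose states track the non--less--conflicting pairs. Working with $A$ and~$B$ over the common alphabet~$\ACT$ and taking $\ACT_T = \ACT$, let the states of $T_\XX$ be a fresh terminal sink~$q_\terminate$ together with all pairs not in $\LC(A,B)$ that are reachable from~$\XX$ under the transitions defined next; declare $\XX$ the unique initial state. For each such state $(Y_A,Y_B)$ and each event $\ev \in \ACT$, include the transition $(Y_A,Y_B) \trans[\ev] \ddet[A,B]((Y_A,Y_B),\ev)$ exactly when $\ddet[A,B]((Y_A,Y_B),\ev) \notin \LC(A,B)$, and include $(Y_A,Y_B) \trans[\terminate] q_\terminate$ exactly when $\ddet[A,B]((Y_A,Y_B),\terminate) \notin \LC(A,B)$, with $q_\terminate$ having no outgoing transitions. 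Because $\ddet[A,B]$ is a function and no $\tau$-transitions are introduced, $T_\XX$ is deterministic, and a trace $s \in \ACTstar$ is readable in $T_\XX$ exactly when every prefix of~$s$ maps~$\XX$ to a non--less--conflicting pair, in which case $\initstate[T] \ttrans[s] \ddet[A,B](\XX,s)$.

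For condition~\refi{it:LC2:G1} I would argue by contradiction. Suppose $(x_A,\initstate[T]) \ttrans[s\terminate]$ for some $x_A \in X_A$. Then $\initstate[T] \ttrans[s\terminate]$ in $T_\XX$, so reading~$s$ reaches a state $(Y_A,Y_B) = \ddet[A,B](\XX,s)$ that carries a $\terminate$-transition. On the other hand $x_A \ttrans[s] y_A \ttrans[\terminate]$ forces $y_A \in \ddet[A](X_A,s) = Y_A$ and hence $\ddet[A](Y_A,\terminate) = \terminate$. Thus the $\terminate$-successor of $(Y_A,Y_B)$ has first component~$\terminate$ and so lies in $\LC^0(A,B)$ by~\eqref{eq:LC:0}; this contradicts the construction rule that only retains $\terminate$-edges to non--less--conflicting successors. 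Hence no marking trace exists and $\LANGM(x_A,\initstate[T]) = \emptyset$.

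The heart of the proof is condition~\refi{it:LC2:G2}, where the essential tool is Remark~\ref{rem:notLC}. Given $x_B \in X_B$ and a run $(x_B,\initstate[T]) \ttrans (y_B,y_T)$, let $s \in \ACTstar$ be the trace read; since $\ACT$-events are shared while $B$ may interleave $\tau$, we get $y_T = \ddet[A,B](\XX,s) = (Y_A,Y_B)$ as a state of $T_\XX$, hence $(Y_A,Y_B) \notin \LC(A,B)$, and $y_B \in \ddet[B](X_B,s) = Y_B$. Applying Remark~\ref{rem:notLC} to the pair $(Y_A,Y_B)$ and the state~$y_B$ yields a trace $t \in \ACTstar$ with $y_B \ttrans[t\terminate]$ such that $\ddet[A,B]((Y_A,Y_B),r) \notin \LC(A,B)$ for every prefix $r \prefix t\terminate$. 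Since every intermediate pair stays outside $\LC(A,B)$, each step of $t\terminate$ is a genuine transition of $T_\XX$, so $y_T \ttrans[t\terminate] q_\terminate$; synchronising with $y_B \ttrans[t\terminate]$ in~$B$ gives $(y_B,y_T) \ttrans[t\terminate]$ and therefore $\LANGM(y_B,y_T) \neq \emptyset$.

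I expect the main obstacle to be the bookkeeping that makes the correspondence between $\ddet[A,B]$-traces and actual runs of $B \sync T_\XX$ exact: one must verify that a run of~$B$ from~$x_B$ always lands inside the $B$-component of the pair reached in $T_\XX$, and that the escape trace supplied by Remark~\ref{rem:notLC} is simultaneously executable in~$B$ (guaranteed by $y_B \ttrans[t\terminate]$) and in $T_\XX$ (guaranteed by the non-$\LC$ invariant along every prefix). Particular care is needed at the final $\terminate$-step, where the remark's assertion that the terminal pair is non--less--conflicting is exactly what permits the $\terminate$-edge to~$q_\terminate$ in condition~\refi{it:LC2:G2}, yet the symmetric fact for the $A$-side produces a $\LC^0$ pair and was what ruled out termination in condition~\refi{it:LC2:G1}.
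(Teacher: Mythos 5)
Your proposal is correct and follows essentially the same route as the paper: the paper defines $T_\XX$ abstractly as the deterministic recogniser of the prefix-closed language $\{\, s \mid \ddet[A,B](\XX,r) \notin \LC(A,B)\ \mbox{for all}\ r \prefix s \,\}$, which is exactly the automaton you build concretely from the non-$\LC$ pairs, and then establishes condition~\refi{it:LC2:G1} via $(\terminate,Y_B) \in \LC^0(A,B)$ and condition~\refi{it:LC2:G2} via \rem~\ref{rem:notLC}, just as you do. The only difference is presentational (explicit state-level construction versus a language-level specification), not mathematical.
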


\begin{proof}
  Construct the deterministic automaton $T_\XX = \detauttuple[T]$ such that
  \begin{equation}
    \label{eq:LC2:T}
    \LANG(T_\XX) = \{\, s \in \ACTstar \cup \ACTstar\terminate \mid
                        \ddet[A,B](\XX,r) \notin \LC(A,B)\
                        \mbox{for all}\ r \prefix s \,\}\ .
  \end{equation}
  This language is prefix-closed by construction and nonempty because $\XX
  \notin \LC(A,B)$. Therefore, $T_\XX$ is a well-defined automaton.

  \refi{it:LC2:G1}
  Let $x_A \in X_A$. If $x_A \ttrans[t\terminate]$ for some $t \in
  \ACTstar$, then $\ddet[A,B](\XX,t\terminate) = (\terminate,Y_B) \in
  \LC^0(A,B) \subseteq \LC(A,B)$ for some $Y\initphant_B \in \Qdet[B]$ by
  \defn\ \ref{def:ddet} and~\ref{def:LC}.
  It follows from~\eqref{eq:LC2:T} that $t\terminate
  \notin \LANG(T_\XX)$, and thus $(x_A,\initstate[T]) \ttrans[t\terminate]$
  does not hold. Since $t \in \ACTstar$ was chosen arbitrarily, it follows
  that $\LANGM(x_A\initphant,\initstate[T]) = \emptyset$.

  \refi{it:LC2:G2}
  Let $x_B \in X_B$, $y_B \in \Q[B]$, $y_T \in \stateset[T]$, and $s \in
  \ACTstar$ such that $(x_B\initphant, \initstate[T]) \ttrans[s] (y_B, y_T)$.
  Clearly $s \in \LANG(T_\XX)$, and by~\eqref{eq:LC2:T} it follows that
  $\ddet[A,B](\XX,r) \notin \LC(A,B)$ for all prefixes $r \prefix s$.
  Let $\ddet[A,B](\XX,s) = \YY$.
  Then $\YY \notin \LC(A,B)$,
  so there exists a trace $t \in \ACTstar$ such that $y_B \ttrans[t\omega]$
  and for all $r \prefix t$ it holds that $\ddet[A,B](\YY, r)
  \notin \LC(A,B)$ (see \rem~\ref{rem:notLC}).
  Thus $x_B \ttrans[s] y_B \ttrans[t\terminate]$ and
  for all prefixes $u \prefix st\terminate$, it holds that
  $\ddet[A,B](\XX,u) \notin \LC(A,B)$.
  Then $st\terminate \in \LANG(T_\XX)$ according to~\eqref{eq:LC2:T},
  and since $T_\XX$ is deterministic, it follows that $y_T
  \ttrans[t\terminate]$.
  Therefore, $(y_B,y_T) \ttrans[t\terminate]$,
  i.e., $\LANGM(y_B, y_T) \neq \emptyset$.
\end{proof}

\subsection{Less Conflicting Pairs and Certain Conflicts}
\label{sub:LC:CONF}

Less conflicting pairs can be used to characterise the set of \emph{certain
conflicts} of an automaton as defined in~\ref{sub:CONF}. This shows the
close link between the conflict preorder and the set of certain conflicts.
If a pair~$(\emptyset,X_B)$ is a less conflicting pair then, since
termination is impossible from~$\emptyset$, conflict must be also present
in~$X_B$. In this case, every trace leading to~$X_B$ must be a trace of
certain conflicts. This observation leads to the following alternative
characterisation of the set of certain conflicts.

\begin{theorem}
  \label{thm:LC:CONF}
  The set of certain conflicts of $B = \auttuple$ can also be written as
  \begin{equation}
    \CONF(B) = \{\, s \in \ACTstar \mid
                    (\emptyset, \ddet[B](\initstateset,r)) \in \LC(O,B)\
                    \mbox{for some prefix}\ r \prefix s \,\}\ ,
  \end{equation}
  where $O = \langle\ACT, \emptyset, \emptyset, \emptyset\rangle$
  stands for the empty automaton.
\end{theorem}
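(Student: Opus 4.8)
The plan is to prove the two inclusions of the claimed identity separately, in each case reducing to one of the two lemmas just established by exploiting the degeneracy of the empty automaton~$O$. The conceptual key is that $\stateset[O] = \emptyset$, so the only deterministic state of~$O$ other than~$\terminate$ is~$\emptyset$, with $\ddet[O](\emptyset,\sigma) = \emptyset$ for every~$\sigma$; in particular $O$ can never reach~$\terminate$, and the state~$\emptyset$ stands for the empty completion language. Thus a pair $(\emptyset,X_B) \in \LC(O,B)$ encodes precisely that no test can avert conflict with~$X_B$. Concretely, whenever $(\emptyset,X_B) \in \LC(O,B)$, condition~\refi{it:LC1:nonblocking1} of \lemm~\ref{lem:LC1} can never hold, since it would require some $x_A \in \emptyset$; hence for every test state~$x_T$ the blocking condition~\refi{it:LC1:blocking2} must hold. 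Dually, condition~\refi{it:LC2:G1} of \lemm~\ref{lem:LC2} is vacuously satisfied for any pair $(\emptyset,X_B)$. Write $R$ for the right-hand side of the claimed identity.

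For the inclusion $R \subseteq \CONF(B)$, I would take $s \in R$ with a witnessing prefix $r \prefix s$ such that $(\emptyset,\ddet[B](\initstateset,r)) \in \LC(O,B)$, and let $T$ be an arbitrary test with $T \ttrans[s]$. Since $r \prefix s$, there is a state~$x_T$ with $T \ttrans[r] x_T$. Applying \lemm~\ref{lem:LC1} to the pair $(\emptyset,\ddet[B](\initstateset,r))$ and~$x_T$, the observation above forces condition~\refi{it:LC1:blocking2}: there are $x_B \in \ddet[B](\initstateset,r)$ and states $y_B,y_T$ with $(x_B,x_T) \ttrans (y_B,y_T)$ and $\LANGM(y_B,y_T) = \emptyset$. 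Now $x_B \in \ddet[B](\initstateset,r)$ gives $\initstateset \ttrans[r] x_B$, and the events of~$r$ are executed by both $B$ and~$T$, hence shared; so one lifts these two runs to a single run $(B \sync T) \ttrans[r] (x_B,x_T) \ttrans (y_B,y_T)$, exhibiting a reachable blocking state of $B \sync T$. Therefore $B \sync T$ is blocking, and as $T$ was arbitrary, $s \in \CONF(B)$.

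For the converse $\CONF(B) \subseteq R$, I would argue by contraposition. Assume $s \notin R$, that is, $(\emptyset,\ddet[B](\initstateset,r)) \notin \LC(O,B)$ for every $r \prefix s$. Taking $r = \varepsilon$ gives $\XX = (\emptyset,\ddet[B](\initstateset,\varepsilon)) \notin \LC(O,B)$, so \lemm~\ref{lem:LC2} supplies a deterministic test~$T_\XX$. Using~\eqref{eq:LC2:T} together with $\ddet[A,B](\XX,r) = (\emptyset,\ddet[B](\initstateset,r))$, the assumption says exactly that every prefix of~$s$ lies in $\LANG(T_\XX)$, whence $T_\XX \ttrans[s]$. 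Moreover, every reachable state $(p,q)$ of $B \sync T_\XX$ satisfies $(q_B,\initstate[T]) \ttrans (p,q)$ for some initial state $q_B \in \initstateset \subseteq \ddet[B](\initstateset,\varepsilon)$, so condition~\refi{it:LC2:G2} yields $\LANGM(p,q) \neq \emptyset$; thus $B \sync T_\XX$ is nonblocking. This test witnesses $s \notin \CONF(B)$, completing the contrapositive.

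The main obstacle is the bookkeeping in the converse direction: one must verify that the test~$T_\XX$ obtained from \lemm~\ref{lem:LC2} for the \emph{initial} pair genuinely accepts all of~$s$ (not merely that it is nonconflicting with each state of $\ddet[B](\initstateset,\varepsilon)$), and that nonblocking of the whole composition $B \sync T_\XX$ follows from condition~\refi{it:LC2:G2}, which is phrased per state of~$X_B$ rather than for the composition's initial states. Both points reduce to the identity $\ddet[B](\ddet[B](\initstateset,\varepsilon),r) = \ddet[B](\initstateset,r)$ and the inclusion $\initstateset \subseteq \ddet[B](\initstateset,\varepsilon)$; once these are noted, the two lemmas deliver the result directly.
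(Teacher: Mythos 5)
Your proposal is correct and follows essentially the same route as the paper: the forward inclusion is the identical application of \lemm~\ref{lem:LC1} with condition~\refi{it:LC1:nonblocking1} ruled out by the empty first component, and your converse direction builds the same witnessing test (the language of traces all of whose prefixes map to pairs outside $\LC(O,B)$). The only cosmetic difference is that you obtain this test and its nonblockingness by invoking \lemm~\ref{lem:LC2} on the initial pair $(\emptyset,\ddet[B](\initstateset,\varepsilon))$ and using $\initstateset \subseteq \ddet[B](\initstateset,\varepsilon)$, whereas the paper reconstructs the test explicitly and reverifies nonblockingness via \rem~\ref{rem:notLC}; your reuse of the lemma is a slightly more economical presentation of the same argument.
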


\begin{proof}
  First let $s \in \ACTstar$ such that $(\emptyset,
  \ddet[B](\initstateset,r)) \in \LC(O,B)$ for some $r \prefix s$, and let
  $T = \auttuple[T]$ be an automaton such that $T \ttrans[s]$. It is to be
  shown that $B \sync T$ is blocking. Since $T \ttrans[s]$ and $r \prefix
  s$, it holds that $T \ttrans[r] x_T$ for some state $x_T \in \Q[T]$.
  Since $(\emptyset, \ddet[B](\initstateset,r)) \in \LC(O,B)$, either
  \refi{it:LC1:nonblocking1} or \refi{it:LC1:blocking2} in
  \lemm~\ref{lem:LC1} holds. However, \refi{it:LC1:nonblocking1} is
  impossible as the first state set of the pair is empty, so
  \refi{it:LC1:blocking2} must be true. Thus, there exists a state $x \in
  \ddet[B](\initstateset, r)$ such that $(x, x_T) \ttrans (y, y_T)$ where
  $\LANGM(y, y_T) = \emptyset$. Then $B \sync T$ is blocking as $B \sync T
  \ttrans[r] (x, x_T) \ttrans (y, y_T)$.

  Conversely, let $s \in \ACTstar$ such that $(\emptyset,
  \ddet[B](\initstateset,r)) \notin \LC(O,B)$ for every prefix $r \prefix
  s$. It is to be shown that $s \in \NCONF(B)$. Consider the deterministic
  automaton $T$ such that
  \begin{equation}
    \label{eq:LC:CONF}
    \LANG(T) = \{\, t \in \ACTstar \mid
                   (\emptyset,\ddet[B](\initstateset, r)) \notin \LC(O,B)\
                   \mbox{for all}\ r \prefix t \,\}\ .
  \end{equation}
  $T$ is a well-defined automaton as $\LANG(T)$ is prefix-closed by
  construction. It remains to be shown that
  $B \sync T$ is nonblocking. Let $B \sync T \ttrans[t] (x, x_T)$.
  Then $t \in \LANG(T)$, and by
  definition of $T$~\eqref{eq:LC:CONF}, it holds that $(\emptyset,
  \ddet[B](\initstateset, t)) \notin \LC(O,B)$, and the same holds for all
  prefixes of~$t$. Also $x \in
  \ddet[B](\initstateset, t)$, so there exists a trace $u \in \ACTstar$
  such that $x \ttrans[u\terminate]$, and for every prefix $r \prefix
  u\terminate$, it holds that $\ddet[O,B](\emptyset,
  \ddet[B](\initstateset, t), r) \notin \LC(O,B)$ (see
  \rem~\ref{rem:notLC}). By definition~\eqref{eq:LC:CONF}, it follows that
  $tu\terminate \in \LANG(T)$, and since $T$ is deterministic also $x_T
  \ttrans[u\terminate]$. Therefore, $B \sync T \ttrans[t] (x, x_T)
  \ttrans[u\terminate]$, i.e., $B \sync T$ is nonblocking.
\end{proof}

The result of \thm~\ref{thm:LC:CONF} shows how less conflicting pairs
generalise certain conflicts for the case when two automata are compared,
and in combination with the algorithm in \sect~\ref{sec:MC}, less
conflicting pairs lead to an alternative presentation of the
algorithm~\cite{Mal:10} to compute the set of certain conflicts.

\subsection{Testing the Conflict Preorder}
\label{sub:LC:confle}

Given the less conflicting pairs for two automata $A$ and~$B$, it is
possible to determine whether $A \confle B$. Automaton~$A$ is less
conflicting than~$B$ if every test~$T$ that is nonconflicting in
combination with~$B$ also is nonconflicting with~$A$. To check this
condition, it is enough to consider traces $B \sync T \ttrans[s]
(x_B,x_T)$, and check whether termination is also possible for every
state~$x_A$ of~$A$ such that $A \sync T \ttrans[s] (x_A,x_T)$. This amounts
to checking whether $(\{x_A\},X_B) \in \LC(A,B)$ when $A \ttrans[s]
x_A$ and $\ddet[B](\initstateset[B],s) = X_B$.

However, this condition does not apply to traces of certain conflicts. If
$s \in \CONF(B)$, then every test~$T$ that can execute~$s$ is in conflict
with~$B$. In this case, $A$ can still be less conflicting than~$B$, no
matter whether $A$ can or cannot execute the trace~$s$ and terminate
afterwards. This observation leads to the following result.

\begin{theorem}
  \label{thm:LC:confle}
  Let $A = \auttuple[A]$ and $B = \auttuple[B]$ be two automata.
  $A$ is less conflicting than $B$ if and only if
  for all $s \in \NCONF(B)$ and all $x_A \in Q_A$ such that $A \ttrans[s]
  x_A$ it holds that $(\{x_A\},X_B) \in \LC(A,B)$,
  where $\ddet[B](\initstateset[B], s) = X_B$.
\end{theorem}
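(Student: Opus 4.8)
The plan is to prove the two implications separately. The ``if'' direction---the stated condition implies $A \confle B$---is direct and relies on \lemm~\ref{lem:LC1}; the ``only if'' direction is the harder one and I would prove it by contraposition, building a distinguishing test from \lemm~\ref{lem:LC2}. I begin with the ``if'' direction. Assuming the condition, let $T$ be any test with $B \sync T$ nonblocking and let $A \sync T \ttrans[s] (x_A, x_T)$ be an arbitrary reachable state; I must show $\LANGM(x_A,x_T) \neq \emptyset$. First I would observe that $s \in \NCONF(B)$: since $A \sync T \ttrans[s]$ forces $T \ttrans[s]$, a trace $s \in \CONF(B)$ would make $B \sync T$ blocking, contradicting the hypothesis (using that $\CONF(B)$ and $\NCONF(B)$ partition $\ACTstar$). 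With $s \in \NCONF(B)$ and $X_B = \ddet[B](\initstateset[B], s)$, the condition yields $(\{x_A\}, X_B) \in \LC(A,B)$, and I apply \lemm~\ref{lem:LC1} to this pair and the state~$x_T$. Case~\refi{it:LC1:nonblocking1} gives $\LANGM(x_A, x_T) \neq \emptyset$ directly. Case~\refi{it:LC1:blocking2} produces some $x_B \in X_B$ and a state with empty marked language reachable from $(x_B, x_T)$; but $x_B \in X_B$ means $B \ttrans[s] x_B$, so together with $T \ttrans[s] x_T$ this blocking state is reachable in $B \sync T$, contradicting nonblocking. Hence $A \sync T$ is nonblocking.

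For the ``only if'' direction I would assume the condition fails, so that there are $s \in \NCONF(B)$ and $x_A$ with $A \ttrans[s] x_A$ and $(\{x_A\}, X_B) \notin \LC(A,B)$, and construct a test~$T$ witnessing $A \not\confle B$, i.e.\ with $B \sync T$ nonblocking but $A \sync T$ blocking. The test combines two ingredients. From $s \in \NCONF(B)$ I obtain a test $T_0$ with $T_0 \ttrans[s]$ and $B \sync T_0$ nonblocking, which may be assumed deterministic. From \lemm~\ref{lem:LC2} applied to the non-\LC\ pair $(\{x_A\}, X_B)$ I obtain a deterministic test $T_\XX$ whose initial state $q_0$ satisfies $\LANGM(x_A, q_0) = \emptyset$ by~\refi{it:LC2:G1}, while every state reachable from $(x_B, q_0)$ with $x_B \in X_B$ still terminates by~\refi{it:LC2:G2}. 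I then graft $T_\XX$ onto $T_0$ after the trace~$s$, defining the deterministic~$T$ by the prefix-closed language $\{\, t \in \LANG(T_0) \mid s \not\prefix t \,\} \cup \{\, s t' \mid t' \in \LANG(T_\XX) \,\}$. Reading $s$ in~$T$ reaches $q_0$, so $A \sync T \ttrans[s] (x_A, q_0)$ is reachable with $\LANGM(x_A, q_0) = \emptyset$, making $A \sync T$ blocking.

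The main obstacle is showing that $B \sync T$ is nonblocking, because grafting $T_\XX$ can destroy the termination routes that $T_0$ supplied for states reached while reading a proper prefix of~$s$. I would analyse a reachable state $B \sync T \ttrans[w] (y_B, q)$ by cases. If $s \prefix w$ then $(y_B, q)$ lies in the grafted copy and is reachable from some $(x_B, q_0)$ with $x_B \in X_B$, so it terminates by~\refi{it:LC2:G2}. If $s \not\prefix w$, then nonblocking of $B \sync T_0$ provides a terminating continuation $u\terminate$ with $y_B \ttrans[u\terminate]$; when this continuation never reaches~$s$ it survives unchanged in~$T$, and otherwise $w \prec s = w s'$ with $s' \prefix u$, so following~$s'$ drives the $B$-component into $X_B$ and~$T$ into~$q_0$, whence~\refi{it:LC2:G2} again guarantees termination. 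The key observation making the grafting safe is that any continuation along~$s$ necessarily lands the $B$-component inside $X_B = \ddet[B](\initstateset[B], s)$---precisely the set on which $T_\XX$ was built to preserve nonblocking. I expect the bookkeeping of these prefix cases, together with the routine verifications that the composed runs synchronise and that the defining language is prefix-closed, to be the most delicate part of the argument.
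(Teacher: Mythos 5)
Your proof is correct and uses the same skeleton as the paper's: the forward direction is the identical application of \lemm~\ref{lem:LC1} (ruling out case~\refi{it:LC1:blocking2} via nonblocking of $B \sync T$ and reachability of every $(x_B,x_T)$ with $x_B \in X_B$), and the converse is by contraposition using the test $T_\XX$ of \lemm~\ref{lem:LC2}. The genuine difference is how the distinguishing test is assembled. The paper takes $N_B$, a deterministic recogniser of $\NCONF(B)$ (for which $B \sync N_B$ is nonblocking, a cited property of certain conflicts), and attaches $T_\XX$ by a single $\tau$-transition from the state reached after~$s$; since this only \emph{adds} a branch, every state whose test component lies in $N_B$ retains all of $N_B$'s terminating continuations, and no redirection argument is needed---the $T_\XX$-part is handled by \refi{it:LC2:G2} via the same $\ddet[B](X_B,u)$ computation you perform. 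Your construction instead \emph{replaces} the future of $T_0$ after~$s$ by $\LANG(T_\XX)$, which destroys continuations and forces the extra prefix case analysis; your key observation---that any continuation of a state reached by $w \prec s$ which passes through $s$ necessarily drives the $B$-component into $X_B = \ddet[B](\initstateset[B],s)$, where \refi{it:LC2:G2} takes over---is exactly what makes the grafting sound, and the bookkeeping you sketch (comparability of the prefixes $s$ and $w$ of $wu\terminate$, and $s' \prefix u$ because $s$ contains no $\terminate$) goes through. The one step you assert without justification is that $T_0$ may be taken deterministic; this is true (take $T_0 = N_B$ itself, or note that determinising a test preserves nonblocking of its composition with $B$), but it needs a line, and the paper's choice of $N_B$ gives it for free. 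Net effect: same lemmas and decomposition, with your version trading the citation that $B \sync N_B$ is nonblocking for somewhat heavier case analysis in the verification that $B \sync T$ is nonblocking.
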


\begin{proof}
  First assume that for all $s \in \NCONF(B)$ and all $x_A \in Q_A$ such
  that $A \ttrans[s] x_A$ it holds that $(\{x_A\},X_B) \in \LC(A,B)$, where
  $\ddet[B](\initstateset[B], s) = X_B$.
  Let $T = \auttuple[T]$ such that $B \sync T$ is nonblocking,
  and assume that $A \sync T \ttrans[s] (x_A,x_T)$.
  Since $B \sync T$ is nonblocking and $T \ttrans[s]$,
  it follows that $s \in \NCONF(B)$.
  Therefore by assumption $(\{x_A\}, X_B) \in \LC(A,B)$,
  so \refi{it:LC1:nonblocking1} or~\refi{it:LC1:blocking2} in
  \lemm~\ref{lem:LC1} must be true.
  However, \refi{it:LC1:blocking2} cannot hold,
  because for all $x_B \in X_B = \ddet[B](\initstateset,s)$
  it holds that $B \sync T \ttrans[s] (x_B,x_T)$,
  and since $B \sync T$ is nonblocking,
  there cannot exist any state $(y_B,y_T)$ such that 
  $(x_B,x_T) \ttrans (y_B,y_T)$ and
  $\LANGM(y_B,y_T) = \emptyset$.
  Thus, \refi{it:LC1:nonblocking1} must be true,
  and this means that $\LANGM(x_A,x_T) \neq \emptyset$.
  Since $T$ and~$s$ such that $A \sync T \ttrans[s] (x_A,x_T)$
  were chosen arbitrarily, it follows that $A \confle B$.

  Second assume that there exists $s \in \NCONF(B)$ and $x_A \in \Q[A]$ 
  such that $A \ttrans[s] x_A$ and $\XX = (\{x_A\}, X_B) \notin
  \LC(A,B)$, where $X_B = \ddet[B](\initstateset[B], s)$.
  Let $N_{B} = \detauttuple[N]$ be a deterministic recogniser of the
  language $\NCONF(B)$, and let $T_\XX = \detauttuple[T]$ be the
  deterministic automaton that exists according to \lemm~\ref{lem:LC2}.
  Since $s \in \NCONF(B)$, there exists a unique state $x_s \in \Q[N]$
  such that $N_{B} \trans[s] x_s$.
  Then construct the automaton
  \begin{equation}
    T = \langle \ACT, \Q[N] \dotcup \Q[T],
                \intrans_N \cup \intrans_T \cup \{(x_s,\tau,\initstate[T])\},
                \{\initstate[N]\}\rangle\ .
  \end{equation}

  Clearly, $A \sync T \ttrans[s] (x_A,x_s) \trans[\tau]
  (x_A\initphant,\initstate[T])$, and $\LANGM(x_A\initphant,\initstate[T])
  = \emptyset$ by \lemm~\ref{lem:LC2}~\refi{it:LC2:G1}. Thus, $A \sync T$
  is blocking.

  On the other hand, $B \sync T$ is nonblocking.
  To see this, consider $B \sync T \ttrans[t] (y_B,y_T)$.
  If $y_T \in \Q[N]$, then it follows from the fact that $B \sync
  N_B$ is nonblocking~\cite{MalStrRee:06} that there exists $u \in
  \ACTstar$ such that $(y_B,y_T) \ttrans[u\terminate]$.
  Otherwise $y_T \in \Q[T]$, which means that $t = su$ and
  $T \trans[s] x_s \trans[\tau] \initstate[T] \trans[u] y_T$.
  Also since $B \ttrans[t] y_B$, it follows that $y_B \in
  \ddet[B](\initstateset[B],t) = \ddet(\initstateset[B],su) =
  \ddet[B](\ddet[B](\initstateset[B],s),u) = \ddet[B](X_B,u)$, i.e.,
  there exists $x_B \in X_B$ such that $x_B \ttrans[u] y_B$. Thus
  $(x_B\initphant,\initstate[T]) \ttrans[u] (y_B,y_T)$, and by
  \lemm~\ref{lem:LC2}~\refi{it:LC2:G2}, it holds that
  $\LANGM(y_B,y_T) \neq \emptyset$.

  Thus, $A \sync T$ is blocking and $B \sync T$ is nonblocking,
  so $A \confle B$ cannot hold.
\end{proof}

\begin{example}
  Consider again automata $A_0$ and~$B_0$ in \fig~\ref{fig:blocking}.
  Recall that $\CONF(B_0) = \alpha\ACTstar$ from
  \examp~\ref{ex:certainconf}, so the only state in~$A_0$ that can be
  reached by a trace $s \notin \CONF(B_0)$ is~$a_0$. Therefore, it is
  enough to check the pair~$(\{a_0\}, \{b_0\})$ according to
  \thm~\ref{thm:LC:confle}, and it has been shown in \examp~\ref{ex:LC:0}
  that $(\{a_0\}, \{b_0\}) \in \LC^1(A_0,B_0)$.
  It follows that $A_0 \confle B_0$.
  This conclusion is made despite the fact that $(\{a_0\}, \{b_2\})
  \notin \LC(A_0,B_0)$, because $(\{a_0\}, \{b_2\})$ is only
  reachable by traces~$\alpha^n \in \CONF(B_0)$, $n\geq2$.
\end{example}

When using \thm~\ref{thm:LC:confle} to determine whether an automaton~$A$
is less conflicting than some blocking automaton~$B$, the set of certain
conflicts of~$B$ must be known first. This can be achieved using
\thm~\ref{thm:LC:CONF}, which makes it possible to classify state sets in
the subset construction of~$B$ as certain conflicts. If a state set $X_B
\subseteq \Q[B]$ is found to represent certain conflicts, i.e.,
$(\emptyset,X_B) \in \LC(O,B)$ according to \thm~\ref{thm:LC:CONF}, then
$(X_A, X_B) \in \LC(A,B)$ for every state set $X_A \subseteq\Q[A]$.
Successors reached only from such pairs are also certain conflicts of~$B$
and should not be considered when testing whether $A \confle B$ according
to \thm~\ref{thm:LC:confle}.

\def\statesize{\scriptsize}

\begin{figure}
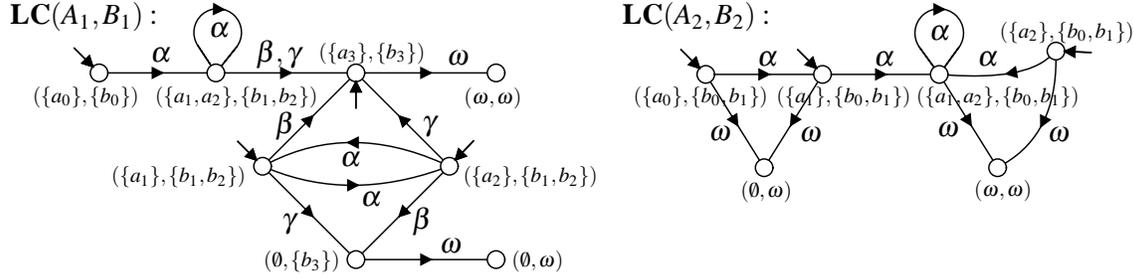

  \centering
  \tabcolsep0pt
  \psfrag{a}{$\alpha$}%
  \psfrag{o}{$\omega$}%
  \psfrag{ww}{\statesize$(\terminate,\terminate)$}%
  \psfrag{nullw}{\statesize$(\emptyset,\terminate)$}%
  \begin{tabular}{l@{\qquad}l}
    $\LC(A_1,B_1):$ &
    $\LC(A_2,B_2):$ \\
    \noalign{\vskip-3ex}
    \psfrag{b}{$\beta$}%
    \psfrag{c}{$\gamma$}%
    \psfrag{bc}{$\beta,\gamma$}%
    \psfrag{a0b0}{\statesize$(\{a_0\},\{b_0\})$}%
    \psfrag{a1a2b1b2}{\statesize$(\{a_1,a_2\},\{b_1,b_2\})$}%
    \psfrag{a3b3}{\statesize$(\{a_3\},\{b_3\})$}%
    \psfrag{a1b1b2}{\statesize$(\{a_1\},\{b_1,b_2\})$}%
    \psfrag{a2b1b2}{\statesize$(\{a_2\},\{b_1,b_2\})$}%
    \psfrag{nullb3}{\statesize$(\emptyset,\{b_3\})$}%
    \kern.4em
    \hangaut{LCA1B1} &
    \psfrag{a0b0b1}{\statesize$(\{a_0\},\{b_0,b_1\})$}%
    \psfrag{a1b0b1}{\statesize$(\{a_1\},\{b_0,b_1\})$}%
    \psfrag{a1a2b0b1}{\statesize$(\{a_1,a_2\},\{b_0,b_1\})$}%
    \psfrag{a2b0b1}{\statesize$(\{a_2\},\{b_0,b_1\})$}%
    \hangaut{LCA2B2}%
    \kern1.5em \\
  \end{tabular}
  \smallskip
  \caption{Less conflicting pairs for the automata pairs in
    \fig\ \ref{fig:notconfeq} and~\ref{fig:confeq}.}
  \label{fig:LC}
\end{figure}

\begin{example}
  Consider again automata $A_1$ and~$B_1$ in \fig~\ref{fig:notconfeq}.
  Composing $A_1$ with a deterministic version of~$B_1$ results in the
  following four pairs of states in~$A_1$ and sets of states in~$B_1$ that
  should be tested according to \thm~\ref{thm:LC:confle} to determine
  whether $A_1 \confle B_1$:
  \begin{equation}
    \label{eq:LCA1B1:init}
    (\{a_0\},\{b_0\})\quad
    (\{a_1\},\{b_1,b_2\})\quad
    (\{a_2\},\{b_1,b_2\})\quad
    (\{a_3\},\{b_3\}) \ .
  \end{equation}
  All four pairs need to be considered as $B_1$ is nonblocking and thus
  $\CONF(B_1) = \emptyset$.

  The graph to the left in \fig~\ref{fig:LC} shows these four pairs and their
  deterministic successors. The four pairs~\eqref{eq:LCA1B1:init} are
  marked as initial states, and the arrows in the graph represent the
  deterministic transition function. Although the deterministic transition
  function is defined for all state set pairs and events, arrows
  to~$(\emptyset,\emptyset)$ are suppressed for clarity of presentation.

  The following less conflicting pairs to compare $A_1$ to~$B_1$ are
  determined from the graph:
  \begin{align}
    (\terminate,\terminate) 
      & \in \LC^0(A_1,B_1) \ ; \\
    (\{a_0\}, \{b_0\}),\ (\{a_1, a_2\}, \{b_1, b_2\}),\ (\{a_3\}, \{b_3\})
      & \in \LC^1(A_1,B_1) \ .
  \end{align}
  For example, $(\{a_1, a_2\}, \{b_1, b_2\}) \in \LC^1(A_1,B_1)$, because
  all the ways to reach termination from state~$b_1$, i.e., all traces in
  $\LANGM(b_1) = \alpha^*\beta\terminate$ take the pair $(\{a_1, a_2\},
  \{b_1, b_2\})$ to $(\terminate,\terminate) \in LC^0(A_1,B_1)$. 
  No further pairs are found in~$\LC^2(A_1,B_1)$, so 
  $\LC(A_1,B_1)$ consists only of the pairs listed above. For example,
  $(\{a_1\}, \{b_1, b_2\}) \notin \LC^2(A_1,B_1)$, because the traces
  $\alpha\beta\terminate \in \LANGM(b_1)$ and $\gamma\terminate \in
  \LANGM(b_2)$ do not have any prefixes that reach a pair
  in~$\LC^1(A_1,B_1)$.

  As $(\{a_1\}, \{b_1, b_2\}) \notin \LC(A_1,B_1)$, it follows from
  \thm~\ref{thm:LC:confle} that $A_1$ is \emph{not} less conflicting
  than~$B_1$.
\end{example}

\begin{example}
  \label{ex:LC2}
  Consider again automata $A_2$ and~$B_2$ in \fig\ \ref{fig:confeq}.
  Again note that $\CONF(B_2) = \emptyset$. 
  By composing $A_2$ with a deterministic version of~$B_2$, it becomes
  clear that the only pairs that need to be tested to determine whether
  $A_2 \confle B_2$ according to \thm~\ref{thm:LC:confle} are
  $(\{a_0\},\{b_0,b_1\})$ reached after~$\varepsilon$,
  $(\{a_1\},\{b_0,b_1\})$ reached after~$\alpha^+$, and
  $(\{a_2\},\{b_0,b_1\})$ reached after~$\alpha\alpha^+$.

  The graph with these pairs and their deterministic successors is shown to
  the right in \fig~\ref{fig:LC}, with the three crucial pairs marked as
  initial. The following less conflicting pairs are discovered (see
  \examp~\ref{ex:LC:succ}):
  \begin{align}
    (\terminate,\terminate)   & \in \LC^0(A_2,B_2) \ ; \\
    \label{eq:LC1A2B2}
    (\{a_1\},\ \{b_0, b_1\}),\ (\{a_1,a_2\},\{b_0, b_1\}),\
      (\{a_2\}, \{b_0, b_1\}) & \in \LC^1(A_2,B_2) \ ; \\
    \label{eq:LC2A2B2}
    (\{a_0\}, \{b_0, b_1\})   & \in \LC^2(A_2,B_2) \ .
  \end{align}
  As the three crucial pairs are all in~$\LC(A_2,B_2)$, it follows from
  \thm~\ref{thm:LC:confle} that $A_2 \confle B_2$.
\end{example}

The result of \thm~\ref{thm:LC:confle} is related to the decision procedure
for fair testing~\cite{RenVog:07}. The fair testing decision procedure
starts by composing the automaton~$A$ with a determinised form of~$B$,
which gives rise to the same state set combinations that need to be
considered as in \thm~\ref{thm:LC:confle}. From this point on,
the two methods differ. The fair testing decision procedure annotates each
state of the synchronous product of~$A$ and the determinised form
of~$B$ with automata representing the
associated refusal trees, and searches for matching automata (or more
precisely, for matching \emph{productive subautomata}) within these
annotations. The method based on less conflicting pairs avoids some of the
resulting complexity by performing the complete decision on the flat state
space of the synchronous product of the determinised forms of $A$ and~$B$.

\section{Algorithm to Compute Less Conflicting Pairs}
\label{sec:MC}

This section proposes a method to effectively compute the less conflicting
pairs for two given finite-state automata $A$ and~$B$. This is done in a
nested iteration. Assuming that the set~$\LC^n(A,B)$ is already known, the
set~$\LC^{n+1}(A,B)$ is computed in a secondary iteration based on
\emph{more conflicting triples}.

\begin{definition}
  \label{def:MC}
  Let $A = \auttuple[A]$ and $B = \auttuple[B]$ be automata. The set
  $\MC^n(A,B) \subseteq \Qdet[A] \times \Qdet[B] \times \Q[B]\initphant$ of
  $n^{\mathrm{th}}$ level \emph{more conflicting triples} for $A$ and~$B$ is
  defined inductively as follows.
  \begin{align}
    \label{eq:MC:0}
    \MC^n_0(A,B) &= \{\, (\emptyset, \terminate, x_B) \mid
                             x_B \in \Q[B] \,\} \ ; \\
    \label{eq:MC:succ}
    \MC^n_{m+1}(A,B) &=
       \LongSet{11em}{$(X_A,X_B,x_B)$ $\mid$
                      $(X_A,X_B) \notin \LC^n(A,B)$ and $x_B \in X_B$ and
                      there exists $(Y_A, Y_B, y_B) \in \MC^n_m(A,B)$
                      and $\sigma \in \ACT$
                      such that $\ddet[A,B](X_A,X_B,\sigma) = (Y_A,Y_B)$
                      and $x_B \ttrans[\sigma] y_B$}; \\
    \label{eq:MC:all}
    \MC^n(A,B) &= \bigcup_{m\geq0} \MC^n_m(A,B) \ .
  \end{align}
\end{definition}

For a pair $(X_A,X_B)$ to be a less conflicting pair, according to
\defn~\ref{def:LC} there must be a state $x_B \in X_B$ such that every
trace that takes $x_B$ to termination in~$B$ has a prefix that leads to
another less conflicting pair. A triple $(X_A,X_B,x_B)$ is considered
``more conflicting'' if $(X_A,X_B)$ is not yet known to be a less
conflicting pair, and the state $x_B \in X_B$ cannot be used to confirm the
above property. Therefore, \lemm~\ref{lem:MC} shows that a triple
$(X_A,X_B,x_B)$ is $n^{\mathrm{th}}$-level ``more conflicting'' if and only
if the state~$x_B \in X_B$ can reach termination without passing through a
pair in~$\LC^n$.

If $(X_A,X_B,x_B)$ is ``more conflicting'' for all $x_B \in X_B$, then the
pair $(X_A,X_B)$ cannot be a less conflicting pair.
Otherwise, if there exists at least one state $x_B \in X_B$ such that
$(X_A,X_B,x_B)$ is not ``more conflicting'', then $(X_A,X_B)$ is added to
set of less conflicting pairs in the next iteration. \Thm~\ref{thm:MC}
below confirms the correctness of this approach.

\begin{lemma}
  \label{lem:MC}
  Let $A = \auttuple[A]$ and $B = \auttuple[B]$ be automata, let $n \in
  \NAT$ and $(X_A, X_B, x_B) \in \Qdet[A] \times \Qdet[B] \times
  \Q[B]\initphant$. The following statements are equivalent.
  \begin{enumerate}
  \item \label{it:MC:MCn}
    $(X_A, X_B, x_B) \in \MC^n(A,B)$;
  \item \label{it:MC:trace}
    There exists a trace $s \in \ACTstar\terminate \cup \{\varepsilon\}$
    such that $\ddet[A,B](X_A,X_B,s) = (\emptyset,\terminate)$ and $x_B
    \ttrans[s]$, and $\ddet[A,B](X_A,X_B,r) \notin \LC^n(A,B)$ for all
    prefixes $r \prefix s$.
  \end{enumerate}
\end{lemma}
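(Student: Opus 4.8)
The plan is to prove the equivalence of the two statements in Lemma~\ref{lem:MC} by establishing both implications, exploiting the inductive structure of $\MC^n$ given in Definition~\ref{def:MC}. The natural strategy is to prove the direction \ref{it:MC:MCn}~$\Rightarrow$~\ref{it:MC:trace} by induction on the inner level~$m$ for which $(X_A,X_B,x_B) \in \MC^n_m(A,B)$, and to prove \ref{it:MC:trace}~$\Rightarrow$~\ref{it:MC:MCn} by induction on the length~$|s|$ of the witnessing trace. In both directions the base case corresponds to $s = \varepsilon$ together with $X_A = \emptyset$ and $X_B = \terminate$, which is exactly the content of~\eqref{eq:MC:0}, and the inductive step peels off a single event~$\sigma$ using~\eqref{eq:MC:succ}.

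First I would handle \ref{it:MC:MCn}~$\Rightarrow$~\ref{it:MC:trace}. For the base case $m=0$, a triple in $\MC^n_0(A,B)$ has the form $(\emptyset,\terminate,x_B)$, and taking $s = \varepsilon$ gives $\ddet[A,B](\emptyset,\terminate,\varepsilon) = (\emptyset,\terminate)$ with $x_B \ttrans[\varepsilon]$ trivially, and there is nothing to check for proper prefixes. For the step, suppose $(X_A,X_B,x_B) \in \MC^n_{m+1}(A,B)$. By~\eqref{eq:MC:succ} there is $(Y_A,Y_B,y_B) \in \MC^n_m(A,B)$ and $\sigma \in \ACT$ with $\ddet[A,B](X_A,X_B,\sigma) = (Y_A,Y_B)$ and $x_B \ttrans[\sigma] y_B$, and moreover $(X_A,X_B) \notin \LC^n(A,B)$. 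The induction hypothesis yields a trace~$s'$ for $(Y_A,Y_B,y_B)$; I would set $s = \sigma s'$ and verify the three conditions: $\ddet[A,B](X_A,X_B,\sigma s') = \ddet[A,B](Y_A,Y_B,s') = (\emptyset,\terminate)$, the composition $x_B \ttrans[\sigma] y_B \ttrans[s']$ gives $x_B \ttrans[s]$, and the prefix condition follows since the only new prefixes are $\varepsilon$ (handled by $(X_A,X_B) \notin \LC^n$) and extensions $\sigma r'$ of prefixes $r'$ of~$s'$ (handled by the hypothesis applied to $\ddet[A,B](X_A,X_B,\sigma r') = \ddet[A,B](Y_A,Y_B,r')$).

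Conversely, for \ref{it:MC:trace}~$\Rightarrow$~\ref{it:MC:MCn}, I would induct on~$|s|$. If $s = \varepsilon$, then $\ddet[A,B](X_A,X_B,\varepsilon) = (X_A,X_B) = (\emptyset,\terminate)$, so the triple is $(\emptyset,\terminate,x_B) \in \MC^n_0(A,B)$ by~\eqref{eq:MC:0}. If $|s| \geq 1$, write $s = \sigma s'$ and set $(Y_A,Y_B) = \ddet[A,B](X_A,X_B,\sigma)$. Since $x_B \ttrans[\sigma s']$, there is $y_B$ with $x_B \ttrans[\sigma] y_B \ttrans[s']$; here I would note that $y_B$ lies in $Y_B = \ddet[B](X_B,\sigma)$, and that $s'$ witnesses condition~\ref{it:MC:trace} for $(Y_A,Y_B,y_B)$, since $\ddet[A,B](Y_A,Y_B,s') = (\emptyset,\terminate)$ and every prefix of $s'$ corresponds to a prefix $\sigma r'$ of $s$, whence it avoids $\LC^n$. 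The induction hypothesis then places $(Y_A,Y_B,y_B) \in \MC^n(A,B)$, say in $\MC^n_m(A,B)$, and together with $(X_A,X_B) \notin \LC^n(A,B)$ (the prefix condition at $r = \varepsilon$) and the transition $x_B \ttrans[\sigma] y_B$, clause~\eqref{eq:MC:succ} yields $(X_A,X_B,x_B) \in \MC^n_{m+1}(A,B) \subseteq \MC^n(A,B)$.

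The main obstacle I expect is purely bookkeeping: correctly managing the interaction between the deterministic transition function $\ddet[A,B]$ applied to the pair and the nondeterministic witness transition $x_B \ttrans[\sigma] y_B$ in~$B$, and in particular making sure that the chosen $y_B$ does belong to $Y_B = \ddet[B](X_B,\sigma)$ (which it does precisely because $\ddet[B]$ collects all $\sigma$-successors of states in~$X_B$). A secondary subtlety is the prefix condition, since $\LC^n$ must be avoided at \emph{all} prefixes, so I would be careful that the case $r = \varepsilon$ (giving $(X_A,X_B) \notin \LC^n$) is exactly the extra requirement imposed by~\eqref{eq:MC:succ} beyond the recursion, and that there is no off-by-one mismatch in the correspondence between prefixes of $s = \sigma s'$ and prefixes of~$s'$. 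The case distinction on whether $s = \varepsilon$ or $s$ begins with~$\sigma$ must also be aligned with the two clauses~\eqref{eq:MC:0} and~\eqref{eq:MC:succ} so that the base and step of each induction match the definition cleanly.
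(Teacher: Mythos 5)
Your proposal follows essentially the same route as the paper: induction on the inner level~$m$ for \ref{it:MC:MCn}$\Rightarrow$\ref{it:MC:trace} and induction on~$|s|$ for the converse, with the base cases matching~\eqref{eq:MC:0} and the steps peeling off one event via~\eqref{eq:MC:succ}; the inductive steps and the bookkeeping of prefixes are handled correctly. The one thing you skip is in the base case of the forward direction: since $\prefix$ is non-strict, the prefix condition for $s=\varepsilon$ is not vacuous but requires $\ddet[A,B](X_A,X_B,\varepsilon)=(\emptyset,\terminate)\notin\LC^n(A,B)$, which is not immediate from \defn~\ref{def:LC} and is obtained in the paper from \lemm~\ref{lem:LC1} (neither of its two conditions can hold for the pair $(\emptyset,\terminate)$, so it lies outside $\LC(A,B)\supseteq\LC^n(A,B)$). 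Adding that observation closes the only gap; everything else is as in the paper's proof.
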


\begin{proof}
  First let $(X_A, X_B, x_B) \in \MC^n(A,B)$, i.e.,
  $(X_A, X_B, x_B) \in \MC^n_m(A,B)$ for some $m \in \NAT$.
  It is shown by induction on~$m$ that \refi{it:MC:trace} holds.

  In the base case, $m = 0$, and by definition $(X_A,X_B,x_B) \in
  \MC^n_0(A,B)$ means that $(X_A,X_B) = (\emptyset,\terminate)$. Then
  consider $s = \varepsilon$, and note $\ddet[A,B](X_A,X_B,\varepsilon) =
  (X_A,X_B) = (\emptyset,\terminate)$ and $x_B \ttrans[\varepsilon]$.
  Clearly $r \prefix \varepsilon$ implies $r = \varepsilon$, and
  $\ddet[A,B](X_A,X_B,\varepsilon) = (\emptyset,\terminate)
  \notin \LC(A,B) \supseteq \LC^n(A,B)$ by \lemm~\ref{lem:LC1}.

  Now consider $(X_A, X_B, x_B) \in \MC^n_{m+1}(A,B)$.
  It follows from \defn~\ref{def:MC} that $(X_A, X_B) \notin LC^n(A,B)$
  and $x_B \in X_B$, and there exists $(Y_A, Y_B, y_B) \in \MC^n_m(A,B)$
  and $\sigma \in \ACT$ such that $\ddet[A,B](X_A,X_B,\sigma) = (Y_A,Y_B)$ and
  $x_B \ttrans[\sigma] y_B$.
  By inductive assumption, there exists a trace $s \in \ACTstar\terminate
  \cup \{\varepsilon\}$ such that $\ddet[A,B](Y_A,Y_B,s) =
  (\emptyset,\terminate)$ and $y_B \ttrans[s]$, and for
  all $r \prefix s$ it holds that $\ddet[A,B](Y_A,Y_B,r) \notin \LC^n(A,B)$.
  Then $\ddet[A,B](X_A,X_B,\sigma s) = \ddet[A,B](Y_A,Y_B, s) =
  (\emptyset,\terminate)$ and $x_B \ttrans[\sigma] y_B \ttrans[s]$, and for
  all $r \prefix \sigma s$ it holds that $\ddet[A,B](X_A,X_B,r) \notin
  \LC^n(A,B)$.

  Conversely, let $s \in \ACTstar\terminate \cup \{\varepsilon\}$ such that
  \refi{it:MC:trace} holds. This means that $\ddet[A,B](X_A,X_B,s) =
  (\emptyset,\terminate)$ and $x_B \ttrans[s]$, and $\ddet[A,B](X_A,X_B,r)
  \notin \LC^n(A,B)$ for all $r \prefix s$. It is shown by induction on $m
  = |s|$ that $(X_A,X_B,x_B) \in \MC^n_m(A,B)$.

  In the base case, $m = 0$ and $s = \varepsilon$, it holds by definition that
  $(X_A,X_B) = \ddet[A,B](X_A,X_B,\varepsilon) = (\emptyset,\terminate) \in
  \MC^n_0(A,B)$.

  Now let $s = \sigma t$ such that $|t|=m$, and $\ddet[A,B](X_A,X_B,s) =
  (\emptyset,\terminate)$ and $x_B \ttrans[s]$, and 
  $\ddet[A,B](X_A,X_B,r) \notin \LC^n(A,B)$ for all prefixes $r \prefix s$.
  Write $\ddet[A,B](X_A,X_B,\sigma) = (Y_A,Y_B)$ and $x_B \ttrans[\sigma] y_B
  \ttrans[t]$.
  Then $y_B \ttrans[t]$ and $\ddet[A,B](Y_A,Y_B,t) =
  \ddet[A,B](X_A,X_B,\sigma t) = \ddet[A,B](X_A,X_B,s) =
  (\emptyset,\terminate)$ and $\ddet[A,B](Y_A,Y_B,r) \notin \LC^n(A,B)$
  for all $r \prefix t$.
  Then $(Y_A,Y_B,y_B) \in \MC^n_m(A,B)$ by inductive assumption, and by
  \defn~\ref{def:MC} it follows that $(X_A,X_B,x_B) \in \MC^n_{m+1}(A,B)$.
\end{proof}

\begin{theorem}
  \label{thm:MC}
  Let $A = \auttuple[A]$ and $B = \auttuple[B]$ be automata, and let
  $n \in \NAT$. Then
  \begin{equation}
    \LC^{n+1}(A,B) = \{\, (X_A,X_B) \in \Qdet[A] \times \Qdet[B] \mid
                          (X_A,X_B,x_B) \notin \MC^n(A,B)\
                          \mbox{for some}\ x_B \in X_B \,\}\ .
  \end{equation}
\end{theorem}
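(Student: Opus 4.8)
The plan is to prove both set inclusions by treating each candidate witness state $x_B \in X_B$ separately and comparing the defining condition of $\LC^{n+1}$ in~\eqref{eq:LC:succ} with the negation of membership in~$\MC^n(A,B)$. The key tool is \lemm~\ref{lem:MC}, which I would use in its contrapositive form: a triple $(X_A,X_B,x_B)$ fails to lie in $\MC^n(A,B)$ exactly when every trace $s \in \ACTstar\terminate \cup \{\varepsilon\}$ with $x_B \ttrans[s]$ that drives the determinised product to $(\emptyset,\terminate)$, i.e.\ $\ddet[A,B](X_A,X_B,s) = (\emptyset,\terminate)$, already has a prefix $r \prefix s$ with $\ddet[A,B](X_A,X_B,r) \in \LC^n(A,B)$. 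Thus both ``$(X_A,X_B)\in\LC^{n+1}$ via witness $x_B$'' and ``$(X_A,X_B,x_B)\notin\MC^n$'' become quantifications over the $\terminate$-terminating traces of $x_B$, and the whole proof reduces to reconciling the two quantifier ranges.

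The bridge between the two conditions is a case split on the $A$-component of the product after a $\terminate$-terminating trace. For $t \in \ACTstar$ with $x_B \ttrans[t\terminate]$, the transition function of \defn~\ref{def:ddet} forces $\ddet[A](X_A,t\terminate) \in \{\terminate,\emptyset\}$. In the first case $\ddet[A,B](X_A,X_B,t\terminate) = (\terminate,\terminate) \in \LC^0(A,B)$, so the whole trace $t\terminate$ is already a good prefix; in the second case $\ddet[A,B](X_A,X_B,t\terminate) = (\emptyset,\terminate)$, which is precisely the situation that \lemm~\ref{lem:MC} tracks. This split is what relates the range ``all $\terminate$-terminating traces of $x_B$'' in~\eqref{eq:LC:succ} to the smaller range ``$\terminate$-terminating traces reaching $(\emptyset,\terminate)$'' coming from $\MC^n(A,B)$.

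For the inclusion $\supseteq$, I would take a witness $x_B \in X_B$ with $(X_A,X_B,x_B) \notin \MC^n(A,B)$ and verify~\eqref{eq:LC:succ} for this same $x_B$: on each $\terminate$-terminating trace, the $\terminate$-subcase lands in $\LC^0(A,B)$ (admissible since~\eqref{eq:LC:succ} allows $\LC^i(A,B)$ for any $i \leq n$), while the $\emptyset$-subcase yields a good prefix directly from the contrapositive of \lemm~\ref{lem:MC}. For the inclusion $\subseteq$, I would take the witness $x_B$ supplied by~\eqref{eq:LC:succ} and assume for contradiction that $(X_A,X_B,x_B) \in \MC^n(A,B)$; \lemm~\ref{lem:MC} then produces a trace $s=t\terminate$ reaching $(\emptyset,\terminate)$ along which no prefix lies in $\LC^n(A,B)$, whereas the witness property of~\eqref{eq:LC:succ} supplies some $r \prefix s$ with $\ddet[A,B](X_A,X_B,r) \in \LC^i(A,B)$ for some $i \leq n$. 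Since $(\emptyset,\terminate)$ belongs to no $\LC^i(A,B)$, this $r$ must be a proper prefix, which is the source of the contradiction.

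I expect the main obstacle to be the mismatch between the disjunction ``$\LC^i(A,B)$ for some $i \leq n$'' used in~\eqref{eq:LC:succ} and the single level $\LC^n(A,B)$ used in \lemm~\ref{lem:MC}. Closing the $\subseteq$ direction needs the monotonicity fact $\LC^i(A,B) \subseteq \LC^n(A,B)$ for $i \leq n$, which follows by a short induction using the empty prefix $r=\varepsilon$ in~\eqref{eq:LC:succ}, but which fails for the two corner pairs $(\terminate,\emptyset)$ and $(\terminate,\terminate)$ that reside only in $\LC^0(A,B)$. I would therefore observe that along any proper prefix $r \prec t\terminate$ of a $\terminate$-terminating trace of $x_B$ the component $\ddet[B](X_B,r)$ stays a nonempty subset of $\Q[B]$, so the reached pair is never one of these corners and the monotonicity applies exactly where it is needed.
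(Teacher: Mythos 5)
Your proposal is correct and follows essentially the same route as the paper's own proof: both directions reduce to \lemm~\ref{lem:MC} combined with the case split $\ddet[A](X_A,t\terminate) \in \{\terminate,\emptyset\}$, using the witness state $x_B$ unchanged on each side. The only point where you go beyond the paper is in making explicit the monotonicity $\LC^i(A,B) \subseteq \LC^n(A,B)$ for $i \leq n$ and its failure at the corner pairs $(\terminate,\emptyset)$ and $(\terminate,\terminate)$ --- a step the paper's proof absorbs silently into the word ``Equivalently'' --- and your observation that proper prefixes of a $\terminate$-terminating trace of $x_B$ never reach those corners closes that gap correctly without altering the overall argument.
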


\begin{proof}
  Let $(X_A, X_B) \in LC^{n+1}(A,B)$.
  Then by \defn~\ref{def:LC}, there exists $x_B \in X_B$ such that for all
  $t \in \ACTstar$ such that $x_B \ttrans[t\terminate]$, there exists $r
  \prefix t\terminate$ such that $\ddet[A,B](X_A, X_B, r) \in \LC^i(A,B)$
  for some $i \leq n$.
  Equivalently, this means that there does not exist a trace $t \in
  \ACTstar$ such that $x_B \ttrans[t\terminate]$ and for all prefixes $r
  \prefix t\terminate$ it holds that $\ddet[A,B](X_A, X_B, r) \notin
  \LC^n(A,B)$.
  Then $(X_A, X_B, x_B) \notin \MC^n(A,B)$ because otherwise such a
  trace would exist by \lemm~\ref{lem:MC}.

  Conversely, let $x_B \in X_B$ such that $(X_A, X_B, x_B) \notin
  \MC^n(A,B)$.
  To check the condition in \defn~\ref{def:LC}~\eqref{eq:LC:succ},
  consider $t \in \ACTstar$ such that $x_B \ttrans[t\terminate]$.
  Then clearly $\ddet[B](X_B,t\terminate) = \terminate$.
  By \defn~\ref{def:ddet}, it holds that either
  $\ddet[A](X_A,t\terminate) = \terminate$ or
  $\ddet[A](X_A,t\terminate) = \emptyset$.
  If $\ddet[A](X_A,t\terminate) = \terminate$,
  then $\ddet[A,B](X_A,X_B,t\terminate) = (\terminate,\terminate) \in
  \LC^0(A,B)$.
  Otherwise $\ddet[A](X_A,t\terminate) = \emptyset$
  and thus $\ddet[A,B](X_A,X_B,t\terminate) = (\emptyset,\terminate)$,
  and by \lemm~\ref{lem:MC} there must exist $r \prefix t\terminate$
  such that $\ddet[A,B](X_A,X_B,r) \in \LC^n(A,B)$
  as otherwise $(X_A,X_B,x_B) \in \MC^n(A,B)$.
  In both cases, $\ddet[A,B](X_A,X_B,r) \in \LC^i(A,B)$ for some $r \prefix
  t\terminate$ and $i \leq n$.
  Since $t \in \ACTstar$ with $x_B \ttrans[t\terminate]$ was chosen
  arbitrarily, it follows from \defn~\ref{def:LC}~\eqref{eq:LC:succ} that
  $(X_A,X_B) \in \LC^{n+1}(A,B)$.
\end{proof}

\begin{example}
\begin{figure}
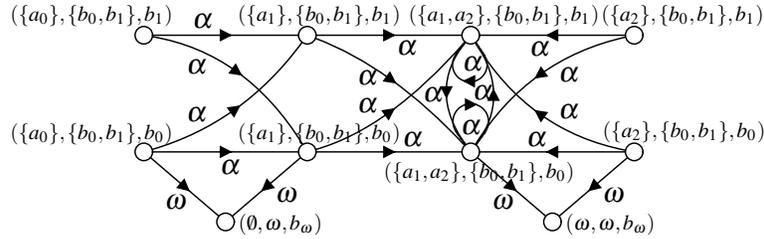

  \psfrag{a}{$\alpha$}
  \psfrag{o}{$\terminate$}
  \psfrag{ww}{\statesize$(\terminate,\terminate,b_\terminate)$}%
  \psfrag{nullw}{\statesize$(\emptyset,\terminate,b_\terminate)$}%
  \psfrag{a0b0}{\statesize$(\{a_0\},\{b_0,b_1\},b_0)$}%
  \psfrag{a1b0}{\statesize$(\{a_1\},\{b_0,b_1\},b_0)$}%
  \psfrag{a1a2b0}{\statesize$(\{a_1,a_2\},\{b_0,b_1\},b_0)$}%
  \psfrag{a2b0}{\statesize$(\{a_2\},\{b_0,b_1\},b_0)$}%
  \psfrag{a0b1}{\statesize$(\{a_0\},\{b_0,b_1\},b_1)$}%
  \psfrag{a1b1}{\statesize$(\{a_1\},\{b_0,b_1\},b_1)$}%
  \psfrag{a1a2b1}{\statesize$(\{a_1,a_2\},\{b_0,b_1\},b_1)$}%
  \psfrag{a2b1}{\statesize$(\{a_2\},\{b_0,b_1\},b_1)$}%
  \centerline{\autgraphics{MCA2B2}\kern4em}
  \caption{Calculating more conflicting triples for automata $A_2$
    and~$B_2$ in \fig~\ref{fig:confeq}.}
  \label{fig:MC}
\end{figure}
  \Fig~\ref{fig:MC} shows a graph representing the more conflicting triples
  to check whether $A_2 \confle B_2$ in \fig~\ref{fig:confeq}. The arrows
  in the graph represent the deterministic transition function in
  combination with the transition relation of~$B_2$. An arrow
  $(X_A,X_B,x_B) \trans[\sigma] (Y_A,Y_B,y_B)$ indicates that
  $\ddet[A_2,B_2](X_A,X_B,\sigma) = (Y_A,Y_B)$ and $x_B \ttrans[\sigma]
  y_B$.

  In the first iteration to compute~$\MC^0(A_2,B_2)$, first the triple
  $(\emptyset, \terminate, b_\terminate)$ is added to $\MC^0_0(A_2,B_2)$.
  Next, the 
  triples $(\{a_0\}, \{b_0, b_1\}, b_0)$ and $(\{a_1\}, \{b_0, b_1\}, b_0)$
  are added to $\MC^0_1(A_2,B_2)$ as they can immediately reach $(\emptyset,
  \terminate, b_\terminate)$. Finally, $(\{a_0\}, \{b_0, b_1\}, b_1)$ is
  also added to $\MC^0_2(A_2,B_2)$ as it reaches $(\{a_1\}, \{b_0, b_1\},
  b_0) \in \MC^0_1(A_2,B_2)$. No further triples are found to be in
  $\MC^0_3(A_2,B_2)$. Therefore, $(\{a_1\}, \{b_0, b_1\}, b_1) \notin
  \MC^0(A_2,B_2)$, so it follows from \thm~\ref{thm:MC} that $(\{a_1\},
  \{b_0, b_1\}) \in \LC^1(A_2,B_2)$, and likewise $(\{a_1,a_2\}, \{b_0,
  b_1\})\bbcom (\{a_2\}, \{b_0, b_1\}) \in \LC^1(A_2,B_2)$.

  In the next iteration to compute~$\MC^1(A_2,B_2)$, note that $(\{a_1\},
  \{b_0, b_1\}, b_0) \notin \MC^1_1(A_2,B_2)$ because $(\{a_1\}, \{b_0,
  b_1\}) \in \LC^1(A_2,B_2)$. Still, $(\{a_0\}, \{b_0, b_1\}, b_0) \in
  \MC^1_1(A_2,B_2)$ because of the transition to $(\emptyset, \terminate,
  b_\terminate) \in \MC^1_0(A_2,B_2)$, but $(\{a_0\}, \{b_0, b_1\}, b_1)
  \notin \MC^1_2(A_2,B_2)$ because now $(\{a_1\}, \{b_0, b_1\}, b_0) \notin
  \MC^1_1(A_2,B_2)$. Accordingly, the pair $(\{a_0\}, \{b_0, b_1\})$ is
  added to $\LC^2(A_2,B_2)$.

  In a final iteration to compute~$\MC^2(A_2,B_2)$, only one more
  conflicting triple is found, $(\emptyset, \terminate, b_\terminate) \in
  \MC^2_0(A_2,B_2)$. No further pairs are added in~$\LC^3(A_2,B_2)$.
  At this point, the iteration terminates, having found exactly the four
  less conflicting pairs given in \examp~\ref{ex:LC2}, \eqref{eq:LC1A2B2}
  and~\eqref{eq:LC2A2B2}.
\end{example}

To determine whether an automaton~$A$ is less conflicting than automaton~$B$,
it is first needed to determine the set of certain conflicts of~$B$, and
then to find all the state-set pairs for $A$ and~$B$ that are reachable from a
pair like $(\{x_A\}, X_B)$ associated with some trace that is not a certain
conflict of~$B$. The more conflicting triples can be constructed as they
are discovered during the backwards search from the terminal states.

The complexity of each iteration of the more conflicting triples computation
is determined by the number of arrows in the graph, which is bounded by
$|\ACT| \cdot |Q_B|^2 \cdot 2^{|Q_A|} \cdot 2^{|Q_B|}$, because the
powerset transitions are deterministic, which is not the case for the
transitions of~$B$. Each iteration except the last adds at least one less
conflicting pair, so the number of iterations is bounded by $2^{|Q_A|}
\cdot 2^{|Q_B|}$. The complexity of this loop dominates all other tasks of
the computation. Therefore, the worst-case time complexity to determine
whether $A \confle B$ using less conflicting pairs is
\begin{equation}
  \label{eq:complexity}
  O(|\ACT| \cdot |Q_B|^2 \cdot 4^{|Q_A|} \cdot 4^{|Q_B|}) \; = \;
  O(|\ACT| \cdot |Q_B|^2 \cdot 2^{2|Q_A| + 2|Q_B|}) \ .
\end{equation}
This shows that the conflict preorder can be tested in linear exponential
time, as it is the case for the fair testing preorder. Yet, the
complexity~\eqref{eq:complexity} is better than the time complexity of the
decision procedure for fair testing, which is $O(|Q_A| \cdot |Q_B| \cdot
2^{3|Q_A| + 5|Q_B|})$~\cite{RenVog:07}.


\section{Conclusions}
\label{sec:conclusions}

Less conflicting pairs provide a concrete state-based means to characterise
the extent by which one process is or is not less conflicting than another.
The characterisation generalises and includes previous results about
certain conflicts, and it gives rise to a direct way to test the conflict
preorder and the related fair testing preorder by inspecting sets of
reachable states. Based on the characterisation, an effective algorithm is
presented to test whether a finite-state automaton is less conflicting than
another. The algorithm, while still linear exponential, has better time
complexity than the previously known decision procedure for fair testing.

In the future, the authors would like to apply the theoretic results of
this paper to compute abstractions and improve the performance of
compositional model checking algorithms. The more thorough understanding of
the conflict preorder will make it possible to better simplify processes
with respect to conflict equivalence and other related liveness properties.


\bibliography{%
  ../references/IEEEabrv,../references/malik_abrv,../references/malik}

\end{document}